\DeclareMathOperator*{\slim}{s-lim}
\DeclareMathOperator{\supp}{supp}
\newtheorem{Prop}{Proposition}[section]
\newtheorem{Ass}[Prop]{Assumption}
\newtheorem{Thm}[Prop]{Theorem}
\newtheorem{Rem}[Prop]{Remark}
\newtheorem{Lem}[Prop]{Lemma}
\title{Quantum inverse scattering for time-decaying harmonic oscillators}
\author{Atsuhide ISHIDA\\
\\
Department of Liberal Arts, Faculty of Engineering,
 \\Tokyo University of Science\\
\normalsize 6-3-1 Niijuku, Katsushika-ku,Tokyo 125-8585, Japan\\
\normalsize E-mail: aishida@rs.tus.ac.jp\\
\normalsize Fax: +81-3-5876-1616
}
\date{}
\begin{document}
\begin{flushleft}
{\Large \bf Quantum inverse scattering for time-decaying harmonic oscillators}
\end{flushleft}

\begin{flushleft}
{\large Atsuhide ISHIDA}\\
{Katsushika Division, Institute of Arts and Sciences, Tokyo University of Science, 6-3-1 Niijuku, Katsushika-ku, Tokyo 125-8585, Japan\\ 
Alfr\'ed R\'enyi Institute of Mathematics, Re\'altanoda utca 13-15, Budapest 1053, Hungary\\
Email: aishida@rs.tus.ac.jp
}
\end{flushleft}

\begin{abstract}
Different from the usual harmonic oscillator, the time-decaying harmonic oscillator accelerates particles and generates scattering states. We study one of the multidimensional inverse scatterings in this two-body quantum system perturbed by short-range potential functions that have a bounded part and a locally singular part. Applying the Enss--Weder time-dependent method, we prove that the scattering operator determines the potential functions uniquely.
\end{abstract}

\quad\textit{Keywords}: Scattering theory, Wave operator, Scattering operator\par
\quad\textit{MSC}2020: 35R30, 81Q10, 81U05, 81U40

\section{Introduction\label{introduction}}
Let $n\geqslant2$, $x=(x_1,\ldots,x_n)\in\mathbb{R}^n$ and ${\rm i}p=(\partial_{x_1},\ldots,\partial_{x_n})$ with ${\rm i}=\sqrt{-1}$. In this paper, we consider the quantum system governed by the following time-dependent free Hamiltonian
\begin{equation}
H_0(t)=p^2/2+k(t)x^2/2
\end{equation}
acting on $L^2(\mathbb{R}^n)$, where the time-decay coefficient of the harmonic term is
\begin{equation}
k(t)=
\begin{cases}
\ \omega^2 & \quad \mbox{if}\quad|t|<r_0,\\
\ \sigma/t^2 & \quad \mbox{if}\quad|t|\geqslant r_0.\label{coeff}
\end{cases}
\end{equation}
for $0<\sigma<1/4$, $\omega>0$ and $r_0>0$. For simplicity, we write
\begin{equation}
0<\lambda=(1-\sqrt{1-4\sigma})/2<1/2.
\end{equation}
We now state the assumptions imposed on the potential functions as multiplication operators that perturb $H_0(t)$.
\begin{Ass}\label{ass}
The potential function $V$ is decomposed into a bounded part and a singular part,
\begin{equation}
V(x)=V^{\rm bdd}(x)+V^{\rm sing}(x).
\end{equation}
$V^{\rm bdd}\in L^\infty(\mathbb{R}^n)$ satisfies
\begin{equation}
|V^{\rm bdd}(x)|\lesssim\langle x\rangle^{-\rho}\label{bdd}
\end{equation}
almost everywhere, with $x\in\mathbb{R}^n$, $\rho>1/(1-\lambda)$, $\langle\cdot\rangle=\sqrt{1+|\cdot|^2}$ and $A\lesssim B$ means that there exists a constant $C>0$ such that $A\leqslant CB$. $V^{\rm sing}\in L^q(\mathbb{R}^n)$ is compactly supported, where $q$ satisfies
\begin{equation}
\infty>q
\begin{cases}
\ =2\quad & \mbox{\rm if}\quad n\leqslant3,\\
\ >n/2\quad & \mbox{\rm if}\quad n\geqslant4.\label{sing}
\end{cases}
\end{equation}
\end{Ass}
The part $V^{\rm sing}$ is well known to be $p^2$-bounded infinitesimally (\cite[Example 4.53]{LoHiBe}). We define the full Hamiltonian such that
\begin{equation}
H(t)=H_0(t)+V(x).
\end{equation}
The Newton equation of classical mechanics
\begin{equation}
({\rm d}^2/{\rm d}t^2)x(t)=-k(t)x(t)\label{newton}
\end{equation}
has general solution $x(t)=c_1t^{1-\lambda}+c_2t^\lambda$ for $t\geqslant r_0$ and the classical trajectory of the free particle behaves like $x(t)=O(t^{1-\lambda})$ as $t\rightarrow\infty$. From this classical motion of the particle, Ishida--Kawamoto \cite[Theorems 1 and 2]{IsKa1} proved that the threshold between short- and long-range is $-1/(1-\lambda)$.\par
The spectrum of $H_0(t)$ is absolutely continuous for $|t|\geqslant r_0$ because $H_0(t)$ is unitary equivalent with $p^2/(2|t|^{2\lambda})$ that is the free part of \eqref{reduced_h}. By virtue of Yajima \cite[Theorem 6 and Remark (a)]{Ya}, the existence of the propagators uniquely generated by $H_0(t)$ and $H(t)$ is guaranteed under Assumption \ref{ass}. We denote these propagators by $U_0(t,s)$ and $U(t,s)$, respectively. The wave operators
\begin{equation}
W^\pm=\slim_{t\rightarrow\pm\infty}U(t,0)^*U_0(t,0)\label{waveop}
\end{equation}
then exist by \cite[Theorem 1]{IsKa1} and the scattering operator is defined such that
\begin{equation}
S(V)=(W^+)^*W^-.
\end{equation}

\begin{Rem}
We can also treat the following combination of $V^{\rm bdd}$- and $V^{\rm sing}$-type potentials $V=V_{\rm sing}$ that satisfy $V_{\rm sing}\in L^q(\mathbb{R}^n)$ with \eqref{sing}, and $\langle x\rangle^\rho V_{\rm sing}(x)\langle p\rangle^{-2}$ is the bounded operator on $L^2(\mathbb{R}^n)$ for $\rho>1/(1-\lambda)$. To prove Theorem \ref{main_thm} for this $V=V_{\rm sing}$, it suffices to modify slightly the proof of Lemma \ref{lem3} (specifically, \eqref{lem3_3}, \eqref{lem3_5}, and \eqref{lem3_12}).
\end{Rem}

\begin{Rem}
\cite[Theorem 1]{IsKa1} proves the existence of \eqref{waveop} only for $V=V^{\rm bdd}$. We can immediately prove the existence of \eqref{waveop} for $V=V^{\rm bdd}+V^{\rm sing}$ and $V=V_{\rm sing}$ using propagation estimates \cite[Proposition 2]{IsKa1} and the standard Cook-Kuroda method \cite[Theorem XI.4]{ReSi2}.
\end{Rem}

\begin{Rem}
If the constants $\omega$, $r_0$ and $\lambda$ satisfy the following relation
\begin{equation}
\omega r_0\tan\omega r_0=-\lambda,\label{tan_relation}
\end{equation}
the ordinary differential equation \eqref{newton} has two fundamental solutions for all $t\in\mathbb{R}$, and by \cite[Theorem 1.2]{KaYo} (see also \cite[Lemma 3.1]{Ko}), the $L^{p_1}L^{p_2}$-type estimates
\begin{equation}
\|U_0(t,0)\phi\|_{L^{p_1}}\lesssim|t|^{-n(1-\lambda)(1/2-1/p_1)}\|\phi\|_{L^{p_2}}
\end{equation}
holds under \eqref{tan_relation} for $\phi\in\mathscr{S}(\mathbb{R}^n)$, which is the rapid decreasing function space, where $p_1\geqslant2$ and $p_2$ is the H\"older conjugate of $p_1$. Note that we do not have to assume \eqref{tan_relation} and any other relations among $\omega$, $r_0$ and $\lambda$ in this paper.
\end{Rem}

Applying the Enss--Weder time-dependent method \cite{EnWe}, we prove the following theorem that claims that the scattering operator determines the potential functions uniquely.
\begin{Thm}\label{main_thm}
Let $V_1$ and $V_2$ satisfy Assumption \ref{ass}. If $S(V_1)=S(V_2)$, then $V_1=V_2$ holds.
\end{Thm}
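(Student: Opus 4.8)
The plan is to implement the Enss--Weder time-dependent method \cite{EnWe}. Fix $\phi,\psi\in\mathscr{S}(\mathbb{R}^n)$ whose Fourier transforms are compactly supported away from the origin, fix a direction $\theta\in\mathbb{S}^{n-1}$, and for a large speed $|v|$ set $v=|v|\theta$, $\phi_v=e^{\mathrm iv\cdot x}\phi$, $\psi_v=e^{\mathrm iv\cdot x}\psi$. Since $W^{\pm}$ are isometries with $S(V)-1=(W^+)^*(W^--W^+)$, and $\partial_s[U(s,0)^*U_0(s,0)]=\mathrm i\,U(s,0)^*V\,U_0(s,0)$, the fundamental theorem of calculus gives the reconstruction identity
\begin{equation}
\mathrm i\,(S(V)-1)=(W^+)^*\int_{-\infty}^{\infty}U(t,0)^{*}\,V\,U_0(t,0)\,\mathrm dt ,
\end{equation}
whose integral converges on states such as $\phi_v$ because the short-range hypothesis forces its integrand to decay in $t$. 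Thus $\langle\mathrm i(S(V)-1)\phi_v,\psi_v\rangle=\int_{-\infty}^{\infty}\langle V\,U_0(t,0)\phi_v,\ U(t,0)W^+\psi_v\rangle\,\mathrm dt$, and everything reduces to the high-velocity asymptotics of this quantity.

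First I would replace $U(t,0)W^+\psi_v$ by $U_0(t,0)\psi_v$ (and, where convenient, $U(t,0)\phi_v$ by $U_0(t,0)\phi_v$) up to errors controlled by $\int_{\mathbb R}\|V\,U_0(t,0)\phi_v\|\,\mathrm dt$ and $\|(W^{\pm}-1)\psi_v\|$, which must be shown to be $o(|v|^{-1})$; it then remains to evaluate $\int_{\mathbb R}\langle V\,U_0(t,0)\phi_v,U_0(t,0)\psi_v\rangle\,\mathrm dt$. The classical orbit leaving the support region with momentum $v$ equals $(v/\omega)\sin\omega t$ for $|t|\le r_0$ and behaves like $\mathrm{const}\cdot t^{1-\lambda}$ for $|t|\ge r_0$, so for large $|v|$ the packet $U_0(t,0)\phi_v$ crosses $\supp V$ only in short windows, of length $\sim|v|^{-1}$, around the finitely many times at which this orbit returns to the origin, the principal one being $t\approx0$. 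On such a window $H_0(t)$ is, after peeling off the explicit oscillator propagator $e^{-\mathrm i t(p^2/2+\omega^2 x^2/2)}$ (a phase times a power of the parity operator when the window lies in $\{|t|\le r_0\}$), merely a bounded perturbation of $p^2/2$, so $U_0$ may be replaced there by $e^{-\mathrm itp^2/2}$; commuting the momentum through $e^{-\mathrm itp^2/2}e^{\mathrm iv\cdot x}$ and substituting $s=|v|t$ converts each window's contribution into an explicit constant multiple of $\int_{\mathbb R}\langle V(\pm x+s\theta)\phi,\psi\rangle\,\mathrm ds$. Summing the (finitely many) windows yields a limiting identity whose $t\approx0$ term is $c_0\,\langle R_\theta V(x)\phi,\psi\rangle$ with $c_0>0$ and $R_\theta V(y):=\int_{\mathbb R}V(y+s\theta)\,\mathrm ds$; a short computation of the remaining windows then shows that the full limit determines the X-ray transform $R_\theta V$ for every $\theta$.

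The rest is technical: proving that everything outside these windows is $o(|v|^{-1})$. This requires (i) quantitative high-velocity smallness of $\int_{\mathbb R}\|V\,U_0(t,0)\phi_v\|\,\mathrm dt$ and $\|(W^{\pm}-1)\phi_v\|$, obtained from the propagation estimates of \cite[Proposition 2]{IsKa1} (which describe the slow, anisotropic spreading $|x|\sim t^{1-\lambda}$ of $U_0(t,0)$ via the unitary equivalence $H_0(t)\simeq p^2/(2|t|^{2\lambda})$); (ii) integrability of the tails, precisely where $\rho>1/(1-\lambda)$ enters, since along the orbit the argument of $V^{\mathrm{bdd}}$ is comparable to $|v|\,t^{1-\lambda}$ and $\int^{\infty}\langle t^{1-\lambda}\rangle^{-\rho}\,\mathrm dt<\infty$ exactly when $\rho(1-\lambda)>1$ --- the short-range threshold of \cite{IsKa1}; and (iii) control of the compactly supported singular part $V^{\mathrm{sing}}\in L^q$ by H\"older's inequality, Sobolev embedding and the $p^2$-boundedness of $V^{\mathrm{sing}}$ together with the dispersive behaviour of $U_0(t,0)$ away from $t=0$, the portion over $\{|t|\le r_0\}$ being estimated separately. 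These are exactly the estimates packaged in Lemma~\ref{lem3} (notably \eqref{lem3_3}, \eqref{lem3_5}, \eqref{lem3_12}) and the accompanying lemmas.

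Finally, applying the limiting identity to $V=V_1$ and $V=V_2$ and invoking $S(V_1)=S(V_2)$ shows that $V_1-V_2$ has vanishing X-ray transform in every direction; since $V_1-V_2$ is bounded with polynomial decay plus a compactly supported $L^q$ piece, injectivity of the X-ray transform (via the Fourier slice theorem) forces $V_1=V_2$. I expect the main obstacle to be parts (i) and (iii) above: because the free particle escapes only at the anomalously slow rate $t^{1-\lambda}$ it lingers near $\supp V$, which sharpens the short-range threshold to $\rho>1/(1-\lambda)$ and makes the high-velocity error bounds delicate, while the non-explicitness of $U_0(t,0)$ routes every estimate through the reduction and the local singularity $V^{\mathrm{sing}}$ is hardest to tame near $t=0$ where the dispersive decay of $U_0(t,0)$ is weakest; one must also keep careful track of the repeated passages of the classical orbit through $\supp V$.
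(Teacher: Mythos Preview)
Your outline is the Enss--Weder scheme the paper follows, and it would succeed, but the paper streamlines the two points you flag as obstacles. First, rather than work with $S$ and $U_0(t,0)$ directly, it passes via \eqref{tilde_S} to the auxiliary objects $\tilde S$ and $\tilde U_0(t)$, noting that $S(V_1)=S(V_2)\Longleftrightarrow\tilde S(V_1)=\tilde S(V_2)$. The gain is that $\tilde U_0(t)$ is fully explicit in each regime \emph{separately}---Mehler's formula \eqref{mehler2} for $|t|<r_0$ and the factorisation \eqref{factorization} for $|t|\geqslant r_0$---so commuting $e^{{\rm i}v\cdot x}$ through it is a one-line computation in each regime; your route through $U_0(t,0)$ would, for $|t|\geqslant r_0$, force you to push the boost first through $e^{-{\rm i}r_0H_0}$ (producing a state centred near $(\sin\omega r_0/\omega)v$ with momentum $\cos\omega r_0\,v$) and only then through the $t^{1-\lambda}$ flow, which is exactly the non-explicitness you anticipate. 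Second, the ``repeated passages'' issue dissolves: the periodicity of $\tan\omega t$ lets one assume $\pi/(2\omega)\leqslant r_0<\pi/\omega$ without loss of generality (see \eqref{r_0} and Remark~\ref{rem2}), so only the single window about $t=0$ survives, the reconstruction constant is exactly~$1$, and the contributions from $\pi/(2\omega)\leqslant|t|<r_0$ and from $|t|\geqslant r_0$ are shown individually to be $o(1)$ after multiplication by $|v|$. One small correction: the key bound is $\int\|V\tilde U_0(t)\Phi_v\|\,{\rm d}t=O(|v|^{-1})$ (Lemmas~\ref{lem1} and~\ref{lem3}), not $o(|v|^{-1})$, and the replacement error is controlled via $\sup_t\|(U(t,0)\tilde W^--\tilde U_0(t))\Phi_v\|=O(|v|^{-1})$ (Lemma~\ref{lem2}) rather than via $\|(W^\pm-1)\psi_v\|$; it is the \emph{product} of these two $O(|v|^{-1})$ quantities, multiplied by $|v|$, that gives the $O(|v|^{-1})$ remainder.
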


Since the Enss--Weder time-dependent method was devised, many authors have applied it to establish the uniqueness of the potential functions for various quantum models. \cite{AdKaKaTo}, \cite{AdFuIs}, \cite{AdMa}, \cite{Is2}, \cite{Ni1}, \cite{Ni2}, \cite{VaWe}, and \cite{We} investigated the models with external electric fields, whereas \cite{Is1}, \cite{Is4} and \cite{Ni3} studied repulsive Hamiltonians, and \cite{Is3} and \cite{Ju} studied fractional and relativistic Laplacians. \cite{Wa1}, \cite{Wa2}, and \cite{Wa3} applied the method to the non-linear Schr\"odinger equations and the Hartree-Fock equations.\par
The time-decaying harmonic oscillator has been an interesting topic for research in mathematical aspect despite the fact that this model does not come from any concrete physical phenomena. For the usual harmonic oscillator, there are no scattering states and all of its spectrum is covered by the infinite discrete pure points. However, if the term $x^2$ has a time-decay coefficient of a specified order, the situation changes completely. This time-decay coefficient accelerates the particles and generates the scattering states. From this perspective, \cite{IsKa1} and \cite{IsKa2} discussed whether the wave operators exist. \cite{IsKa1} proved that $V(x)=O(|x|^{-\rho_{\rm L}})$ as $|x|\rightarrow\infty$ with $0<\rho_{\rm L}\leqslant1/(1-\lambda)$ has to belong to the long-range class and proposed Dollard-type modified wave operators. If $\sigma=1/4$ in \eqref{coeff}, the circumstances of scattering change considerably. \cite{IsKa2} found that the classical trajectory has order $x(t)=\sqrt{t}\log t$ as $t\rightarrow\infty$ and clarified the threshold between short- and long-range. In contrast, \cite{KaYo} and \cite{Ka3} constructed the Strichartz estimates, and recent studies of non-linear analysis \cite{Ka1}, \cite{Ka2}, \cite{Ka3}, \cite{KaMi}, \cite{KaMu} and \cite{KaSa} have shown progress.\par
By the definition of $k(t)$, $H_0(t)\equiv H_0=p^2/2+\omega^2x^2/2$ is a time-independent harmonic oscillator for $|t|<r_0$. Let us here state the well-known Mehler formula for $H_0$ (\cite[Section 2.2]{BoCaHaMi} or \cite[Theorem 5.25]{LoHiBe}); specifically, the time evolution for $0<|t|<r_0$ and $\omega t\notin\pi\mathbb{Z}$ is represented as
\begin{equation}
e^{-{\rm i}tH_0}=\mathscr{M}(\tan\omega t/\omega)\mathscr{D}(\sin\omega t/\omega)\mathscr{F}\mathscr{M}(\tan\omega t/\omega)\label{mehler1}
\end{equation}
where $\mathscr{M}$ denotes multiplication and $\mathscr{D}$ denotes dilation,
\begin{gather}
\mathscr{M}(t)\phi(x)=e^{{\rm i}x^2/(2t)}\phi(x),\\
\mathscr{D}(t)\phi(x)=({\rm i}t)^{-n/2}\phi(x/t),
\end{gather}
and $\mathscr{F}$ denotes the Fourier transform over $L^2(\mathbb{R}^n)$ defined by the dense extension of
\begin{equation}
\mathscr{F}\phi(\xi)=\int_{\mathbb{R}^n}e^{-{\rm i}x\cdot\xi}\phi(x)dx/(2\pi)^{n/2}
\end{equation}
for $\phi\in\mathscr{S}(\mathbb{R}^n)$. A straightforward calculation yields
\begin{gather}
\mathscr{D}(\sin\omega t/\omega)={\rm i}^{n/2}\mathscr{D}(\cos\omega t)\mathscr{D}(\tan\omega t/\omega),\\
\mathscr{M}(\tan\omega t/\omega)\mathscr{D}(\cos\omega t)\mathscr{M}(-\tan\omega t/\omega)=\mathscr{M}(-\cot\omega t/\omega)\mathscr{D}(\cos\omega t)
\end{gather}
and
\begin{equation}
e^{-{\rm i}tH_0}={\rm i}^{n/2}\mathscr{M}(-\cot\omega t/\omega)\mathscr{D}(\cos\omega t)e^{-{\rm i}\tan\omega tp^2/(2\omega)}\label{mehler2}
\end{equation}
for $\omega t\notin(\pi/2)\mathbb{Z}$ because
\begin{equation}
e^{-{\rm i}tp^2/2}=\mathscr{M}(t)\mathscr{D}(t)\mathscr{F}\mathscr{M}(t).
\end{equation}
The formula \eqref{mehler2} was derived originally in Ishida \cite{Is1} for the repulsive Hamiltonian. On the other hand, $U_0(t,s)$ and $U(t,s)$ also have the convenient factorizations for $t,s\geqslant r_0$ or $t,s\leqslant-r_0$, that were proved by \cite[Proposition 1]{IsKa1}. We define
\begin{equation}
\tilde{U}_0(t)=e^{{\rm i}\lambda x^2/(2t)}e^{-{\rm i}\lambda\log tA}e^{-{\rm i}t^{1-2\lambda}p^2/(2(1-2\lambda))}\label{factorization}
\end{equation}
if $t\geqslant r_0$ and
\begin{equation}
\tilde{U}_0(t)=e^{{\rm i}\lambda x^2/(2t)}e^{-{\rm i}\lambda\log(-t)A}e^{{\rm i}(-t)^{1-2\lambda}p^2/(2(1-2\lambda))}
\end{equation}
if $t\leqslant-r_0$, where $A=(p\cdot x+x\cdot p)/2$. Then
\begin{equation}
U_0(t,s)=\tilde{U}_0(t)\tilde{U}_0(s)^*
\end{equation}
and
\begin{equation}
U(t,s)=e^{{\rm i}\lambda x^2/(2t)}e^{-{\rm i}\lambda\log|t|A}\hat{U}(t,s)e^{{\rm i}\lambda\log|s|A}e^{{-\rm i}\lambda x^2/(2s)}
\end{equation}
hold for $t,s\geqslant r_0$ or $t,s\leqslant-r_0$, where $\hat{U}(t,s)$ is the propagator generated by
\begin{equation}
\hat{H}(t)=p^2/(2|t|^{2\lambda})+V(|t|^\lambda x).\label{reduced_h}
\end{equation}
We additionally define
\begin{equation}
\tilde{U}_0(t)=e^{-{\rm i}tH_0}
\end{equation}
if $|t|<r_0$. The following strong limits
\begin{equation}
\tilde{W}^{\pm}=\slim_{t\rightarrow\pm\infty}U(t,0)^*\tilde{U}_0(t)
\end{equation}
exist because \eqref{waveop} exist and we define
\begin{equation}
\tilde{S}(V)=(\tilde{W}^+)^*\tilde{W}^-.
\end{equation}
Noting that $W^\pm=\tilde{W}^\pm\tilde{U}_0(s_\pm)^*U_0(s_\pm,0)$ for $s_+\geqslant r_0$ and $s_-\leqslant-r_0$, we easily find that $S$ and $\tilde{S}$ the relation
\begin{equation}
S(V)=U_0(s_+,0)^*\tilde{U}_0(s_+)\tilde{S}(V)\tilde{U}_0(s_-)^*U_0(s_-,0),\label{tilde_S}
\end{equation}
and that $S(V_1)=S(V_2)$ is equivalent to $\tilde{S}(V_1)=\tilde{S}(V_2)$. To analyze the time-evolution by $U_0(t,0)$ for all $t\in\mathbb{R}$ directly is difficult but can be overcome by pursuing the evolution of $e^{-{\rm i}\tan\omega tp^2/(2\omega)}$ if $|t|<r_0$ and $e^{\mp{\rm i}|t|^{1-2\lambda}p^2/(2(1-2\lambda))}$ if $|t|\geqslant r_0$. While the particles escape with order $x(t)=O(|t|^{1-\lambda})$ through the potential effects when $|t|\geqslant r_0$, the particles cannot scatter far away from the time-independent harmonic oscillator when $|t|<r_0$. We consequently reconstruct the potential functions from the time-independent harmonic oscillator (see the proofs of Theorems \ref{the1} and \ref{the2}).\par

Throughout this paper, we use the following notation; $\|\cdot\|$ denotes the $L^2$-norm and operator norm on $L^2(\mathbb{R}^n)$, $(\cdot,\cdot)$ the scalar product of $L^2(\mathbb{R}^n)$, and $F(\cdots)$ the characteristic function of the set $\{\cdots\}$. 

\section{Bounded case}
We first consider the instances for which $V^{\rm sing}\equiv0$, that is, $V=V^{\rm bdd}$ and prove the following reconstruction formula. 

\begin{Thm}\label{the1}
Let $\Phi_0\in\mathscr{S}(\mathbb{R}^n)$ such that $\mathscr{F}\Phi_0\in C_0^\infty(\mathbb{R}^n)$. For $v\in\mathbb{R}^n$, its normalization is $\hat{v}=v/|v|$. Let $\Phi_v=e^{{\rm i}v\cdot x}\Phi_0$ and $\Psi_v$ have the same properties. Then
\begin{equation}
\lim_{|v|\rightarrow\infty}|v|({\rm i}(\tilde{S}(V^{\rm bdd})-1)\Phi_v,\Psi_v)=\int_{-\infty}^\infty(V^{\rm bdd}(x+\hat{v}t)\Phi_0,\Psi_0){\rm d}t
\end{equation}
holds.
\end{Thm}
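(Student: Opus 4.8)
The plan is to run the Enss--Weder time-dependent scheme, recovering $V^{\rm bdd}$ from the first non-trivial term in the high-velocity expansion of $\tilde S(V^{\rm bdd})-1$; write $V=V^{\rm bdd}$ throughout. Since ${\rm i}\partial_t\tilde U_0(t)=H_0(t)\tilde U_0(t)$ away from $t=\pm r_0$ and $H(t)-H_0(t)=V$, differentiating $U(t,0)^*\tilde U_0(t)$ and integrating (keeping track separately of the jumps of $\tilde U_0$ at $\pm r_0$ and of the surface terms $U(\pm r_0,0)^*\tilde U_0(\pm r_0)$) yields
\[
\bigl({\rm i}(\tilde S(V)-1)\Phi_v,\Psi_v\bigr)=\int_{-\infty}^{\infty}\bigl(V\tilde U_0(t)\Phi_v,\,U(t,0)\tilde W^+\Psi_v\bigr)\,{\rm d}t+\cdots .
\]
Replacing $U(t,0)\tilde W^+\Psi_v$ by $\tilde U_0(t)\Psi_v$ produces a second integral of $U(t,\tau)V\tilde U_0(\tau)\Psi_v$ over $\tau\ge t$, quadratic in $V$; this ``double collision'' term, like the surface terms, will be $o(|v|^{-1})$ once the propagation bounds below are available. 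So the object to analyse is $M(v)=\int_{\mathbb R}(V\tilde U_0(t)\Phi_v,\tilde U_0(t)\Psi_v)\,{\rm d}t$.

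To compute $M(v)$ I would treat the two time regimes with the explicit forms of $\tilde U_0(t)$. For $|t|\ge r_0$ the factorization \eqref{factorization} applies: the unimodular phase $e^{{\rm i}\lambda x^2/(2t)}$ cancels against $V$, conjugation through $e^{-{\rm i}\lambda\log|t|A}$ replaces $V(x)$ by $V(|t|^\lambda x)$, and what remains is $(V(|t|^\lambda x)\,e^{\mp{\rm i}|t|^{1-2\lambda}p^2/(2(1-2\lambda))}\Phi_v,\,e^{\mp{\rm i}|t|^{1-2\lambda}p^2/(2(1-2\lambda))}\Psi_v)$. The free wavepacket is concentrated near $x\sim|t|^{1-2\lambda}v$, so $V(|t|^\lambda x)$ is probed near $|t|^{1-\lambda}v$, which escapes to infinity; with $|V(x)|\lesssim\langle x\rangle^{-\rho}$ this gives $\|V\tilde U_0(t)\Phi_v\|\lesssim\langle|t|^{1-\lambda}v\rangle^{-\rho}$, and it is exactly the hypothesis $\rho>1/(1-\lambda)$ that makes $\int_{|t|\ge r_0}\|V\tilde U_0(t)\Phi_v\|\,{\rm d}t\lesssim|v|^{-\rho}$; this is the propagation estimate \cite[Proposition~2]{IsKa1}. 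Hence $|v|$ times the $|t|\ge r_0$ part of $M(v)$, and $|v|$ times the double-collision and surface terms, all vanish in the limit. For $|t|<r_0$ one has $\tilde U_0(t)=e^{-{\rm i}tH_0}$ and the Mehler formula \eqref{mehler2}, after cancelling the multiplication phase $\mathscr M$ and moving $\mathscr D(\cos\omega t)$ onto $V$, turns $(Ve^{-{\rm i}tH_0}\Phi_v,e^{-{\rm i}tH_0}\Psi_v)$ into
\[
\bigl(V(\cos\omega t\,x)\,e^{-{\rm i}\tan(\omega t)p^2/(2\omega)}\Phi_v,\,e^{-{\rm i}\tan(\omega t)p^2/(2\omega)}\Psi_v\bigr).
\]
The free wavepacket is now concentrated near $(\tan\omega t/\omega)v$; since the harmonic motion $(\sin\omega t/\omega)v$ carries the packet out of $\supp V$ as soon as $|t|$ exceeds $\sim|v|^{-1}$, outside such a window about $t=0$ this integrand is $O(|v|^{-N})$ for every $N$.

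It remains to evaluate the $t=0$ window. Substitute $s=\tan(\omega t)/\omega$, so ${\rm d}t={\rm d}s/(1+\omega^2s^2)$ and $\cos\omega t=(1+\omega^2s^2)^{-1/2}$, then rescale $s=\sigma/|v|$. With $e^{-{\rm i}(\sigma/|v|)p^2/2}\Phi_v=e^{{\rm i}v\cdot x}e^{-{\rm i}\sigma|v|/2}\bigl(e^{-{\rm i}(\sigma/|v|)p^2/2}\Phi_0\bigr)(x-\sigma\hat v)$ and the analogue for $\Psi$, the oscillatory factors cancel between the two slots, $e^{-{\rm i}(\sigma/|v|)p^2/2}\Phi_0\to\Phi_0$ in $\mathscr S(\mathbb R^n)$, and $V\bigl((1+\omega^2\sigma^2/|v|^2)^{-1/2}x\bigr)\to V(x)$ (strong continuity of dilations on $L^p$, using that $\langle x\rangle^{-\rho}\in L^p$ for large $p$). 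Dominated convergence applies because $\|V(\cos\omega t\,x)e^{-{\rm i}\tan(\omega t)p^2/(2\omega)}\Phi_v\|\lesssim\langle\sigma\rangle^{-\rho}\in L^1({\rm d}\sigma)$ (as $\rho>1$), and the pointwise limit of the rescaled integrand is $\int V(x)\Phi_0(x-\sigma\hat v)\overline{\Psi_0(x-\sigma\hat v)}\,{\rm d}x=(V(x+\sigma\hat v)\Phi_0,\Psi_0)$. Putting the pieces together gives $\lim_{|v|\to\infty}|v|M(v)=\int_{-\infty}^{\infty}(V(x+\hat v t)\Phi_0,\Psi_0)\,{\rm d}t$, i.e. the assertion. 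The main obstacle is the uniform-in-$v$ propagation estimates: (i) integrability of $\|V\tilde U_0(t)\Phi_v\|$ as $|t|\to\infty$, which is sharp at $\rho=1/(1-\lambda)$ and requires the factorization \eqref{factorization} together with a careful split of the free wavepacket into its bulk near the classical trajectory and its rapidly decaying tail; and (ii) the bookkeeping at the switch-points $t=\pm r_0$ and the behaviour of $e^{-{\rm i}tH_0}$ on $0<|t|<r_0$, which must be controlled so that, after multiplication by $|v|$, only the single passage through $\supp V$ near $t=0$ survives.
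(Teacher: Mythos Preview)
Your plan is essentially the paper's argument: the same Duhamel representation of ${\rm i}(\tilde S-1)$, the same key propagation estimate $\int\|V\tilde U_0(t)\Phi_v\|\,{\rm d}t=O(|v|^{-1})$ (Lemma~\ref{lem1}), the same replacement of $U(t,0)\tilde W^\pm$ by $\tilde U_0(t)$ at cost $O(|v|^{-1})$ (Lemma~\ref{lem2}), and the same dominated-convergence extraction of the Radon transform from the window near $t=0$. Two differences in execution are worth noting. The paper substitutes $\tau=|v|\sin\omega t/\omega$ rather than your $\sigma=|v|\tan\omega t/\omega$, via the conjugation identity \eqref{the1_3}; and, assuming without loss $\pi/(2\omega)\le r_0<\pi/\omega$, it separately disposes of the strip $\pi/(2\omega)\le|t|<r_0$ in \eqref{the1_6}, a region your $\tan$-substitution does not reach. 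A smaller slip: for $|t|<r_0$ bounded away from $0$ your ``$O(|v|^{-N})$ for every $N$'' is too optimistic, since $V^{\rm bdd}$ is not compactly supported; the actual order is $|v|^{-\rho}$, which still gives $o(|v|^{-1})$ after integration over a bounded $t$-interval because $\rho>1$.

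There is, however, a genuine gap in your treatment of the ``surface terms''. You assert that the jumps of $\tilde U_0$ at $\pm r_0$ will be $o(|v|^{-1})$ ``once the propagation bounds below are available'', but those bounds concern $\|V\tilde U_0(t)\Phi_v\|$ and always carry a factor of $V$; the jump $[\tilde U_0(r_0^+)-\tilde U_0(r_0^-)]$ is a fixed bounded operator with no $V$ attached, so Lemma~\ref{lem1} says nothing about $\bigl([\tilde U_0(r_0^+)-\tilde U_0(r_0^-)]\Phi_v,\cdot\bigr)$. The paper does not raise this point at all---it simply writes the Duhamel formulas as though $\tilde U_0$ were continuous across $\pm r_0$. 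If you insist on keeping the jump bookkeeping, you need an independent argument: for instance, both $\tilde U_0(r_0^\pm)$ are metaplectic, so $\tilde U_0(r_0^+)\Phi_v$ and $\tilde U_0(r_0^-)\Psi_v$ are localized near distinct phase-space points that are $O(|v|)$ apart, which forces rapid decay of their overlap. Without such an argument the surface contribution cannot be absorbed into the $o(|v|^{-1})$ remainder by the mechanism you cite.
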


We now prepare to prove Theorem \ref{the1}. The following propagation estimates for the free evolution $e^{-{\rm i}tp^2/2}$ \cite[Proposition 2.10]{En} is very useful in some of our estimates.

\begin{Prop}\label{enss}
Let $M$ and $M'$ be measurable subsets of $\mathbb{R}^n$ and $f\in C_0^\infty(\mathbb{R}^n)$ have $\supp f\subset\{\xi\in\mathbb{R}^n\bigm||\xi|\leqslant\eta\}$ for some $\eta>0$. Then
\begin{equation}
\|F(x\in M')e^{-{\rm i}tp^2/2}f(p)F(x\in M)\|\lesssim_{N,f}(1+|t|+r)^{-N}
\end{equation}
for $t\in\mathbb{R}$ and $N\in\mathbb{N}$, where $r={\rm dist}(M',M)-\eta|t|\geqslant0$ and $\lesssim_{N,f}$ means that the constant depends on $N$ and $f$.
\end{Prop}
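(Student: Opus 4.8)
The plan is to exploit that $e^{-{\rm i}tp^2/2}f(p)$ is a convolution operator and to reduce the operator norm to a tail integral of its kernel. With $z=x-y$ the operator has kernel $\kappa_t(z)=(2\pi)^{-n}\int_{\mathbb{R}^n}e^{{\rm i}(z\cdot\xi-t\xi^2/2)}f(\xi)\,{\rm d}\xi$, so $F(x\in M')e^{-{\rm i}tp^2/2}f(p)F(x\in M)$ has kernel $F(x\in M')\kappa_t(x-y)F(y\in M)$. For $x\in M'$ and $y\in M$ one has $|x-y|\geqslant d:={\rm dist}(M',M)=r+\eta|t|$, so Schur's test gives $\|F(x\in M')e^{-{\rm i}tp^2/2}f(p)F(x\in M)\|\leqslant\int_{|z|\geqslant d}|\kappa_t(z)|\,{\rm d}z$. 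It thus suffices to bound this tail integral, and since $1+|t|+r\lesssim 1+d$ it is enough to obtain decay in $d$ together with a gain in $|t|$.

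Next I would run a non-stationary phase argument on $\kappa_t$. The phase $\phi(\xi)=z\cdot\xi-t\xi^2/2$ has $\nabla\phi=z-t\xi$ and a single critical point $\xi_*=z/t$ with $|\xi_*|=|z|/|t|\geqslant\eta$ throughout $|z|\geqslant d\geqslant\eta|t|$; hence $\xi_*$ lies outside $\supp f\subset\{|\xi|\leqslant\eta\}$. Integrating by parts $M$ times with the transpose of $L=({\rm i}|\nabla\phi|^2)^{-1}\nabla\phi\cdot\nabla$ produces, by a routine induction, a sum of terms of the form $|t|^{a}|z-t\xi|^{-(2a+|\beta|)}\partial^{\beta}f$ with $a+|\beta|=M$, so that $|\kappa_t(z)|\lesssim_{M,f}\sum_{a+|\beta|=M}\int_{\supp f}|t|^{a}|z-t\xi|^{-(2a+|\beta|)}\,{\rm d}\xi$.

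I would then split $|z|\geqslant d$ at $|z|=2\eta|t|$. On the outer part $|z|\geqslant2\eta|t|$ one has $|z-t\xi|\geqslant|z|/2$ and $|t|\leqslant|z|/(2\eta)$, so the Hessian-type factor $|t|/|z-t\xi|^{2}$ and the gradient factor $1/|z-t\xi|$ are both controlled by $|z|^{-1}$, giving $|\kappa_t(z)|\lesssim_{M,f}(1+|z|)^{-M}$ (combined with the trivial bound $|\kappa_t|\lesssim_f1$). Integrating over $|z|\geqslant\max(d,2\eta|t|)$ yields a contribution $\lesssim(1+\max(d,2\eta|t|))^{-(M-n)}\lesssim(1+|t|+r)^{-(M-n)}$, which settles the outer part once $M$ is large.

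The main obstacle is the transition region $\eta|t|\leqslant|z|<2\eta|t|$, nonempty only when $r<\eta|t|$ (so $1+|t|+r\approx1+|t|$ there): as $\xi$ ranges over $\supp f$ the gradient $|z-t\xi|=|t|\,|\xi_*-\xi|$ can be as small as $r$, so the integration-by-parts bound degenerates. The resolution is that $f\in C_0^\infty$ vanishes to infinite order on $\{|\xi|=\eta\}$, which by radial Taylor expansion gives uniform estimates $|\partial^{\beta}f(\xi)|\lesssim_{k,f}(\eta-|\xi|)^{k}$ for every $k$. Writing $|t|^{a}|z-t\xi|^{-(2a+|\beta|)}=|t|^{-M}|\xi_*-\xi|^{-(M+a)}$ (using $a+|\beta|=M$ and $|z-t\xi|=|t|\,|\xi_*-\xi|$) and noting $|\xi_*-\xi|\geqslant|\xi_*|-|\xi|\geqslant\eta-|\xi|$, the choice $k=M+a$ bounds each $\xi$-integral uniformly in $\xi_*$, whence $|\kappa_t(z)|\lesssim_{M,f}|t|^{-M}$ on this annulus. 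Since the annulus has measure $\lesssim|t|^{n}$, its contribution is $\lesssim|t|^{-(M-n)}\approx(1+|t|+r)^{-(M-n)}$, and choosing $M=N+n$ finishes the proof. I expect the quantitative matching of the vanishing order of $f$ against the singularity $|\xi_*-\xi|^{-(M+a)}$ to be the only delicate point; for $|t|$ small the transition region is empty and everything reduces to the elementary Schwartz decay of $\kappa_0$.
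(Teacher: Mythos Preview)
The paper does not prove this proposition; it is quoted from Enss \cite[Proposition~2.10]{En} and used as a black box. Your argument---reduce via Schur's test to the tail integral $\int_{|z|\geqslant d}|\kappa_t(z)|\,{\rm d}z$ of the convolution kernel, then run non-stationary phase on $\kappa_t$---is essentially Enss's own proof, and it is correct. The splitting at $|z|=2\eta|t|$ and the treatment of the inner annulus by matching the Taylor exponent $k=M+a$ of $\partial^\beta f$ at $\{|\xi|=\eta\}$ against the singularity $|\xi_*-\xi|^{-(M+a)}$ is exactly the mechanism needed; the claimed structure of the terms after $M$ integrations by parts (namely $|t|^a|z-t\xi|^{-(2a+|\beta|)}\partial^\beta f$ with $a+|\beta|=M$) is the standard outcome for a phase with constant Hessian $-tI$.

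One small imprecision: for small $|t|>0$ the transition annulus $\{\eta|t|\leqslant|z|<2\eta|t|\}$ need not be empty, contrary to your closing remark. But whenever it is nonempty one has $r<\eta|t|$, hence $1+|t|+r\lesssim1+|t|$; for $|t|\leqslant1$ this is $\lesssim1$, so the trivial bound $|\kappa_t|\lesssim_f1$ together with the volume $\lesssim|t|^n$ already gives a bounded (and therefore acceptable) contribution. This does not affect the validity of your argument.
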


The following Lemma is the key propagation estimate in this section.

\begin{Lem}\label{lem1}
Let $\Phi_v$ be as in Theorem \ref{the1}. Then
\begin{equation}
\int_{-\infty}^{\infty}\|V^{\rm bdd}(x)\tilde{U}_0(t)\Phi_v\|{\rm d}t=O(|v|^{-1})\label{lem1_0}
\end{equation}
holds as $|v|\rightarrow\infty$.
\end{Lem}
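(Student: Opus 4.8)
\emph{Proof strategy.} The plan is to split $\int_{-\infty}^{\infty}=\int_{|t|\geqslant r_0}+\int_{|t|<r_0}$ and, in each regime, to use the factorizations of $\tilde{U}_0(t)$ recorded above to reduce $V^{\rm bdd}(x)\tilde{U}_0(t)$ to something controlled by a free Schr\"odinger flow $e^{-{\rm i}\tau p^2/2}$ applied to $\Phi_v$. Because $\Phi_v=e^{{\rm i}v\cdot x}\Phi_0$ is a high-momentum state, $e^{-{\rm i}\tau p^2/2}\Phi_v$ is concentrated near $\tau v$, i.e.\ far from the origin, and the decay $|V^{\rm bdd}(x)|\lesssim\langle x\rangle^{-\rho}$ with $\rho>1/(1-\lambda)$ then produces the $O(|v|^{-1})$ bound after integrating in $t$.

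For $t\geqslant r_0$ I use \eqref{factorization}. The leftmost factor $e^{{\rm i}\lambda x^2/(2t)}$ is unimodular multiplication and disappears from $\|V^{\rm bdd}(x)\,\cdot\,\|$, while conjugating $V^{\rm bdd}(x)$ through the dilation $e^{-{\rm i}\lambda\log tA}$ replaces it by $V^{\rm bdd}(t^{\lambda}x)$; hence, with $\tau=t^{1-2\lambda}/(1-2\lambda)$,
\[\|V^{\rm bdd}(x)\tilde{U}_0(t)\Phi_v\|=\|V^{\rm bdd}(t^{\lambda}x)e^{-{\rm i}\tau p^2/2}\Phi_v\|.\]
Since $e^{-{\rm i}\tau p^2/2}\Phi_v=e^{{\rm i}v\cdot x}e^{-{\rm i}\tau v^2/2}e^{-{\rm i}\tau v\cdot p}e^{-{\rm i}\tau p^2/2}\Phi_0$, the modulus of $e^{-{\rm i}\tau p^2/2}\Phi_v$ at $x$ equals that of $e^{-{\rm i}\tau p^2/2}\Phi_0$ at $x-\tau v$. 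Splitting the $x$-integral into the bulk $\{|x-\tau v|\leqslant|\tau v|/2\}$, where $|t^{\lambda}x|\gtrsim t^{1-\lambda}|v|$ and hence $|V^{\rm bdd}(t^{\lambda}x)|\lesssim\langle t^{1-\lambda}v\rangle^{-\rho}$, and the complementary region, which is $\lesssim\|V^{\rm bdd}\|_{L^{\infty}}\|F(|x|>|\tau v|/2)e^{-{\rm i}\tau p^2/2}\Phi_0\|=O_N((1+|\tau v|)^{-N})$ for every $N$ by Proposition \ref{enss} (after cutting $\Phi_0$ off to $\{|x|\leqslant|\tau v|/4\}$ at the cost of a rapidly decaying term), I obtain for $|v|$ large
\[\|V^{\rm bdd}(x)\tilde{U}_0(t)\Phi_v\|\lesssim_N\langle t^{1-\lambda}v\rangle^{-\rho}+(1+|\tau v|)^{-N}.\]
Integrating the first summand over $[r_0,\infty)$ gives $\lesssim|v|^{-\rho}\int_{r_0}^{\infty}t^{-(1-\lambda)\rho}\,{\rm d}t$, which is finite \emph{exactly because} $\rho>1/(1-\lambda)$ and hence is $O(|v|^{-\rho})=O(|v|^{-1})$ (recall $\rho>1$); for the second summand the change of variables $t\mapsto\tau$ (so ${\rm d}t=t^{2\lambda}\,{\rm d}\tau$) gives $O(|v|^{-1/(1-2\lambda)})=O(|v|^{-1})$ once $N$ is large. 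The half-line $t\leqslant-r_0$ is identical up to replacing $t$ by $|t|$ and an irrelevant sign in the $p^2$-exponent.

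For $|t|<r_0$ we have $\tilde{U}_0(t)=e^{-{\rm i}tH_0}$, and here I insert the Mehler formula \eqref{mehler2}: the factor $\mathscr{M}(-\cot\omega t/\omega)$ is unimodular and drops out, $\mathscr{D}(\cos\omega t)$ turns $V^{\rm bdd}(x)$ into $V^{\rm bdd}(\cos\omega t\,x)$, so for $\omega t\notin(\pi/2)\mathbb{Z}$
\[\|V^{\rm bdd}(x)e^{-{\rm i}tH_0}\Phi_v\|=\|V^{\rm bdd}(\cos\omega t\,x)e^{-{\rm i}\tan\omega t\,p^2/(2\omega)}\Phi_v\|.\]
Running the same wave-packet computation with ``time'' $s=\tan\omega t/\omega$, and using $\cos\omega t\cdot sv=(\sin\omega t/\omega)v$, gives $\|V^{\rm bdd}(x)e^{-{\rm i}tH_0}\Phi_v\|\lesssim_N\langle(\sin\omega t/\omega)v\rangle^{-\rho}+(1+|(\tan\omega t/\omega)v|)^{-N}$, a bound that extends to the finitely many remaining $t\in(-r_0,r_0)$ by continuity of $t\mapsto\|V^{\rm bdd}(x)e^{-{\rm i}tH_0}\Phi_v\|$. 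Away from the finitely many zeros $t_k=k\pi/\omega$ of $\sin\omega t$ in $(-r_0,r_0)$ this is $O(|v|^{-\rho})$, while near each $t_k$ one has $|\sin\omega t|\gtrsim|t-t_k|$ and $\int\langle c|t-t_k||v|\rangle^{-\rho}\,{\rm d}t=O(|v|^{-1})$ because $\rho>1$; the $(1+|(\tan\omega t/\omega)v|)^{-N}$ term is handled the same way. Hence $\int_{|t|<r_0}\|V^{\rm bdd}(x)e^{-{\rm i}tH_0}\Phi_v\|\,{\rm d}t=O(|v|^{-1})$, and adding the two regimes gives \eqref{lem1_0}.

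The one genuinely delicate point is the harmonic regime near the times where \eqref{mehler2} degenerates: at the zeros of $\sin\omega t$ the reduction to a free flow breaks down and one must either argue by continuity as above or fall back on the crude bound $\|V^{\rm bdd}\|_{L^{\infty}}\|\Phi_0\|$ on the $O(|v|^{-1})$-measure neighbourhoods of the $t_k$; at the poles $\cos\omega t=0$, $e^{-{\rm i}tH_0}$ is a rescaled Fourier transform, so (as $\mathscr{F}\Phi_0\in C_0^{\infty}$) it maps $\Phi_v$ to a packet supported near $\omega v$, which is harmless. One can bypass \eqref{mehler2} entirely by using the Heisenberg identity $e^{{\rm i}tH_0}xe^{-{\rm i}tH_0}=x\cos\omega t+(p/\omega)\sin\omega t$, valid since $H_0$ is quadratic, to write $\|V^{\rm bdd}(x)e^{-{\rm i}tH_0}\Phi_v\|=\|V^{\rm bdd}\big((\sin\omega t/\omega)v+x\cos\omega t+(\sin\omega t/\omega)p\big)\Phi_0\|$; since $\Phi_0\in\mathscr{S}(\mathbb{R}^n)$, the joint spectral localization of the commuting vector operator $x\cos\omega t+(\sin\omega t/\omega)p$ on $\Phi_0$ is uniform for $|t|\leqslant r_0$, giving $\|V^{\rm bdd}(x)e^{-{\rm i}tH_0}\Phi_v\|\lesssim\langle(\sin\omega t/\omega)v\rangle^{-\rho}$ uniformly in $t$ in one stroke. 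The remaining ingredients --- the dilation and translation identities, the bulk/tail splitting, and Proposition \ref{enss} --- are routine.
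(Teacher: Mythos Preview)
Your argument is correct and follows essentially the same route as the paper's proof: the same splitting $\int_{-\infty}^\infty=\int_{|t|<r_0}+\int_{|t|\geqslant r_0}$, the same reduction via \eqref{mehler2} and \eqref{factorization} to a free Schr\"odinger flow acting on $\Phi_v$, and the same bulk/tail decomposition (your bulk term is the paper's $I_3$ resp.\ $I_6$, your Enss-tail and Schwartz-decay tail are the paper's $I_1,I_2$ resp.\ $I_4,I_5$), leading to the same bounds $\langle\sin\omega t\,v\rangle^{-\rho}$, $\langle\tan\omega t\,v\rangle^{-N}$, $\langle t^{1-\lambda}v\rangle^{-\rho}$, $\langle t^{1-2\lambda}v\rangle^{-N}$. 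The only cosmetic differences are that the paper normalizes to $\pi/(2\omega)\leqslant r_0<\pi/\omega$ by periodicity whereas you treat all zeros and poles of $\sin\omega t$, $\tan\omega t$ in $(-r_0,r_0)$ directly, and your closing Heisenberg-picture alternative does not appear in the paper.
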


\begin{proof}
We can take $f\in C_0^\infty(\mathbb{R}^n)$ such that $\Phi_0=f(p)\Phi_0$ and $\supp f\subset\{\xi\in\mathbb{R}^n\bigm||\xi|\leqslant\eta\}$ with some $\eta>0$. We separate the integral such that
\begin{equation}
\int_{-\infty}^{\infty}=\int_{|t|<r_0}+\int_{|t|\geqslant r_0}
\end{equation}
and consider $|t|<r_0$ first. By \eqref{mehler2} and the relation
\begin{equation}
e^{-{\rm i}v\cdot x}e^{-{\rm i}\tan\omega tp^2/(2\omega)}e^{{\rm i}v\cdot x}=e^{-{\rm i}\tan\omega t|v|^2/(2\omega)}e^{-{\rm i}\tan\omega tp\cdot v/\omega}e^{-{\rm i}\tan\omega tp^2/(2\omega)},\label{lem1_1}
\end{equation}
we have
\begin{gather}
\|V^{\rm bdd}(x)e^{-{\rm i}tH_0}\Phi_v\|=\|V^{\rm bdd}(\cos\omega tx)e^{-{\rm i}\tan\omega tp^2/(2\omega)}\Phi_v\|\nonumber\\
=\|V^{\rm bdd}(\cos\omega tx+\sin\omega tv/\omega)e^{-{\rm i}\tan\omega tp^2/(2\omega)}\Phi_0\|\leqslant I_1+I_2+I_3,\label{lem1_2}
\end{gather}
where we put
\begin{align}
I_1&=\|V^{\rm bdd}(x)\|\|F(|x|\geqslant|\tan\omega t||v|/(2\omega))e^{-{\rm i}\tan\omega tp^2/(2\omega)}f(p)\nonumber\\
&\qquad\times F(|x|\leqslant|\tan\omega t||v|/(4\omega))\|\|\Phi_0\|,\nonumber\\
I_2&=\|V^{\rm bdd}(x)\|\|F(|x|\geqslant|\tan\omega t||v|/(2\omega))e^{-{\rm i}\tan\omega tp^2/(2\omega)}f(p)\nonumber\\
&\qquad\times F(|x|>|\tan\omega t||v|/(4\omega))\langle x\rangle^{-2}\|\|\langle x\rangle^2\Phi_0\|,\qquad\nonumber\\
I_3&=\|V^{\rm bdd}(\cos\omega tx+\sin\omega tv/\omega)F(|x|<|\tan\omega t||v|/(2\omega))\|\|\Phi_0\|\label{lem1_3}
\end{align}
as in the proof of \cite[Proposition 2.2]{Is1} (see also \cite{AdKaKaTo}, \cite{AdFuIs}, \cite{AdMa}, \cite{EnWe}, \cite{Is2}, \cite{Is3}, \cite{VaWe}, and \cite{We}). Because of the periodicity of $\tan\omega t$, we can assume that
\begin{equation}
\pi/(2\omega)\leqslant r_0<\pi/\omega\label{r_0}
\end{equation}
without loss of generality. Moreover, if $r_0<\pi/(2\omega)$, we can demonstrate our proofs much more simply. We state this details in Remark \ref{rem2}. Using Proposition \ref{enss} for $I_1$, we have
\begin{equation}
\int_{|t|<r_0}(I_1+I_2){\rm d}t\lesssim\int_0^{\pi/(2\omega)}+\int_{\pi/(2\omega)}^{r_0}\langle\tan\omega tv\rangle^{-2}{\rm d}t.
\end{equation}
When $0\leqslant t<\pi/(2\omega)$, $\tan\omega t\geqslant\omega t$ and
\begin{equation}
\int_0^{\pi/(2\omega)}\langle\tan\omega tv\rangle^{-2}{\rm d}t\leqslant\int_0^{\pi/(2\omega)}\langle\omega tv\rangle^{-2}{\rm d}t=|v|^{-1}\int_0^{\pi|v|/(2\omega)}\langle\omega\tau\rangle^{-2}{\rm d}\tau=O(|v|^{-1})\label{lem1_4}
\end{equation}
hold by changing $\tau=t|v|$. When $\pi/(2\omega)\leqslant t<r_0$, $|\tan\omega t|>\pi-\omega t$ and
\begin{gather}
\int_{\pi/(2\omega)}^{r_0}\langle\tan\omega tv\rangle^{-2}{\rm d}t\leqslant\int_{\pi/(2\omega)}^{r_0}\langle(\pi-\omega t)v\rangle^{-2}{\rm d}t\nonumber\\
=|v|^{-1}\int_{(\pi/\omega-r_0)|v|}^{\pi|v|/(2\omega)}\langle\omega\tau\rangle^{-2}{\rm d}\tau=O(|v|^{-2})\label{lem1_5}
\end{gather}
hold by changing $\tau=(\pi/\omega-t)|v|$. As for $I_3$, when $|x|<|\tan\omega t||v|/(2\omega)$,
\begin{equation}
|\cos\omega tx+\sin\omega tv/\omega|>|\sin\omega t||v|/(2\omega)
\end{equation}
and
\begin{equation}
I_3\leqslant\|V^{\rm bdd}(x)F(|x|>|\sin\omega t||v|/(2\omega))\|\|\Phi_0\|
\end{equation}
hold. Assuming \eqref{bdd}, we have
\begin{equation}
\int_{|t|<r_0}I_3{\rm d}t\lesssim\int_0^{\pi/(2\omega)}+\int_{\pi/(2\omega)}^{r_0}\langle\sin\omega tv\rangle^{-\rho}{\rm d}t=O(|v|^{-1})+O(|v|^{-\rho}),\label{lem1_6}
\end{equation}
noting that $\rho>1/(1-\lambda)>1$ because $\sin\omega t\geqslant\omega t/2$ when $0\leqslant t<\pi/(2\omega)$, and $\sin\omega t>(\pi-\omega t)/2$ when $\pi/(2\omega)\leqslant t<r_0$. We next consider the integral over $|t|\geqslant r_0$, in particular, we consider $t\geqslant r_0$. Integral over $t\leqslant-r_0$ can be estimated in the same way with $t\geqslant r_0$. By \eqref{factorization} and relation
\begin{gather}
e^{-{\rm i}v\cdot x}e^{-{\rm i}t^{1-2\lambda}p^2/(2(1-2\lambda))}e^{{\rm i}v\cdot x}\nonumber\\
=e^{-{\rm i}t^{1-2\lambda}|v|^2/(2(1-2\lambda))}e^{-{\rm i}t^{1-2\lambda}p\cdot v/(1-2\lambda)}e^{-{\rm i}t^{1-2\lambda}p^2/(2(1-2\lambda))},
\end{gather}
we have
\begin{gather}
\|V^{\rm bdd}(x)\tilde{U}_0(t)\Phi_v\|=\|V^{\rm bdd}(t^\lambda x)e^{-{\rm i}t^{1-2\lambda}p^2/(2(1-2\lambda))}\Phi_v\|\nonumber\\
=\|V^{\rm bdd}(t^\lambda x+t^{1-\lambda}v/(1-2\lambda))e^{-{\rm i}t^{1-2\lambda}p^2/(2(1-2\lambda))}\Phi_0\|\lesssim I_4+I_5+I_6,\label{lem1_7}
\end{gather}
where we put, with $N\in\mathbb{N}$,
\begin{align}
I_4&=\|F(|x|\geqslant t^{1-2\lambda}|v|/(2(1-2\lambda)))e^{-{\rm i}t^{1-2\lambda}p^2/(2(1-2\lambda))}f(p)\nonumber\\
&\qquad\times F(|x|\leqslant t^{1-2\lambda}|v|/(4(1-2\lambda)))\|,\nonumber\\
I_5&=\|F(|x|>t^{1-2\lambda}|v|/(4(1-2\lambda)))\langle x\rangle^{-N}\|,\nonumber\\
I_6&=\|V^{\rm bdd}(t^\lambda x+t^{1-\lambda}v/(1-2\lambda))F(|x|<t^{1-2\lambda}|v|/(2(1-2\lambda)))\|\label{lem1_8}
\end{align}
as in \eqref{lem1_3}. Using Proposition \ref{enss} for $I_4$, we have
\begin{gather}
\int_{t\geqslant r_0}(I_4+I_5){\rm d}t\lesssim\int_{r_0}^\infty\langle t^{1-2\lambda}v\rangle^{-N}{\rm d}t\nonumber\\
=(|v|^{-1/(1-2\lambda)}/(1-2\lambda))\int_{r_0^{1-2\lambda}|v|}^\infty\langle \tau\rangle^{-N}\tau^{2\lambda/(1-2\lambda)}{\rm d}\tau=O(|v|^{-N}),\label{lem1_9}
\end{gather}
where we changed $\tau=t^{1-2\lambda}|v|$ and chose $N\gg1$ such that $-N+2\lambda/(1-2\lambda)<-1$. As for $I_6$, when $|x|<t^{1-2\lambda}|v|/(2(1-2\lambda))$,
\begin{equation}
|t^\lambda x+t^{1-\lambda}v/(1-2\lambda)|>t^{1-\lambda}|v|/(2(1-2\lambda))\label{lem1_10}
\end{equation}
and
\begin{equation}
I_6\leqslant\|V^{\rm bdd}(x)F(|x|>t^{1-\lambda}|v|/(2(1-2\lambda)))\|
\end{equation}
hold. By the assumption of $V^{\rm bdd}$ \eqref{bdd}, we have
\begin{gather}
\int_{r_0}^\infty I_6{\rm d}t\lesssim\int_{r_0}^\infty\langle t^{1-\lambda}v\rangle^{-\rho}{\rm d}t\\
=(|v|^{-1/(1-\lambda)}/(1-\lambda))\int_{r_0^{1-\lambda}|v|}^\infty\langle \tau\rangle^{-\rho}\tau^{\lambda/(1-\lambda)}{\rm d}\tau=O(|v|^{-\rho}),\label{lem1_11}
\end{gather}
where we changed $\tau=t^{1-\lambda}|v|$ and used $-\rho+\lambda/(1-\lambda)<-1$. Equations \eqref{lem1_4}, \eqref{lem1_5}, \eqref{lem1_6}, \eqref{lem1_9}, and \eqref{lem1_11} imply \eqref{lem1_0}.
\end{proof}

\begin{Lem}\label{lem2}
Let $\Phi_v$ be as in Theorem \ref{the1}. Then
\begin{equation}
\sup_{t\in\mathbb{R}}\|(U(t,0)\tilde{W}^--\tilde{U}_0(t))\Phi_v\|=O(|v|^{-1})
\end{equation}
holds as $|v|\rightarrow\infty$.
\end{Lem}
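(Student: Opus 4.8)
The plan is to write $U(t,0)\tilde W^--\tilde U_0(t)$ as a Cook--Kuroda (Duhamel) integral and then invoke Lemma \ref{lem1}. Since $U(t,0)$ is unitary,
\begin{equation*}
\|(U(t,0)\tilde W^--\tilde U_0(t))\Phi_v\|=\|(\tilde W^--U(t,0)^*\tilde U_0(t))\Phi_v\|,
\end{equation*}
so it suffices to bound the right-hand side uniformly in $t\in\mathbb{R}$.

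To carry this out, I would combine $\tilde W^-=\slim_{s\to-\infty}U(s,0)^*\tilde U_0(s)$ with the following computation. On the operator domain $D(p^2)\cap D(x^2)$ --- which contains $\Phi_v=e^{{\rm i}v\cdot x}\Phi_0$, is left invariant by the propagators $U(\tau,0)$ and $\tilde U_0(\tau)$, and on which $V^{\rm bdd}$ acts boundedly --- the family $\tau\mapsto U(\tau,0)^*\tilde U_0(\tau)$ is strongly continuous and piecewise strongly $C^1$, with
\begin{equation*}
\frac{{\rm d}}{{\rm d}\tau}\bigl(U(\tau,0)^*\tilde U_0(\tau)\bigr)={\rm i}\,U(\tau,0)^*\bigl(H(\tau)-H_0(\tau)\bigr)\tilde U_0(\tau)={\rm i}\,U(\tau,0)^*V^{\rm bdd}(x)\tilde U_0(\tau),
\end{equation*}
where one uses ${\rm i}\partial_\tau U(\tau,0)=H(\tau)U(\tau,0)$, the equation ${\rm i}\partial_\tau\tilde U_0(\tau)=H_0(\tau)\tilde U_0(\tau)$ --- the harmonic oscillator equation for $|\tau|<r_0$, and for $|\tau|\geqslant r_0$ a consequence of the factorization $U_0(\tau,\tau')=\tilde U_0(\tau)\tilde U_0(\tau')^*$ of \cite[Proposition 1]{IsKa1} --- and the identity $H(\tau)-H_0(\tau)=V^{\rm bdd}(x)$. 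Hence, for finite $s<t$,
\begin{equation*}
\bigl(U(s,0)^*\tilde U_0(s)-U(t,0)^*\tilde U_0(t)\bigr)\Phi_v=-{\rm i}\int_s^tU(\tau,0)^*V^{\rm bdd}(x)\tilde U_0(\tau)\Phi_v\,{\rm d}\tau.
\end{equation*}
Letting $s\to-\infty$, the left-hand side tends to $(\tilde W^--U(t,0)^*\tilde U_0(t))\Phi_v$ by definition, while the right-hand side converges because Lemma \ref{lem1} makes $\tau\mapsto\|V^{\rm bdd}(x)\tilde U_0(\tau)\Phi_v\|$ integrable over $\mathbb{R}$, so that the tail $\int_{-\infty}^s$ vanishes; therefore
\begin{equation*}
\bigl(\tilde W^--U(t,0)^*\tilde U_0(t)\bigr)\Phi_v=-{\rm i}\int_{-\infty}^tU(\tau,0)^*V^{\rm bdd}(x)\tilde U_0(\tau)\Phi_v\,{\rm d}\tau.
\end{equation*}

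Taking norms and discarding the unitary $U(\tau,0)^*$ produces a bound that no longer depends on $t$,
\begin{equation*}
\sup_{t\in\mathbb{R}}\|(U(t,0)\tilde W^--\tilde U_0(t))\Phi_v\|\leqslant\int_{-\infty}^\infty\|V^{\rm bdd}(x)\tilde U_0(\tau)\Phi_v\|\,{\rm d}\tau=O(|v|^{-1})
\end{equation*}
by Lemma \ref{lem1}, which is exactly the assertion.

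The only genuinely delicate point is the rigorous justification of the Duhamel step, i.e.\ that $\tau\mapsto U(\tau,0)^*\tilde U_0(\tau)$ is strongly absolutely continuous with the stated derivative; this rests on the regularity of the propagators (Yajima \cite{Ya} and \cite[Proposition 1]{IsKa1}) and on $\Phi_v$ lying in the propagator-invariant operator domain $D(p^2)\cap D(x^2)$, together with the convergence of the improper integral, which is furnished by Lemma \ref{lem1}. The switch in the explicit form of $\tilde U_0$ at $|\tau|=r_0$ concerns only a null set of $\tau$ and does not affect the estimate.
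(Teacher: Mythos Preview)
Your proof is correct and follows essentially the same Cook--Kuroda/Duhamel argument as the paper: rewrite the difference via the integral $\tilde W^--U(t,0)^*\tilde U_0(t)=-{\rm i}\int_{-\infty}^tU(\tau,0)^*V^{\rm bdd}(x)\tilde U_0(\tau)\,{\rm d}\tau$, take norms, and invoke Lemma \ref{lem1}. You supply more detail on the domain and regularity issues than the paper does, but the underlying method is identical.
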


\begin{proof}
This proof is taken from \cite[Corollary 2.3]{EnWe} (see also \cite{AdKaKaTo}, \cite{AdFuIs}, \cite{AdMa}, \cite{Is1}, \cite{Is2}, \cite{Is3}, \cite{Ni1}, \cite{Ni2}, \cite{Ni3}, \cite{VaWe}, and \cite{We}). We calculate
\begin{gather}
\tilde{W}^--U(t,0)^*\tilde{U}_0(t)=-\int_{-\infty}^t({\rm d}/{\rm d}\tau)U(\tau,0)^*\tilde{U}_0(\tau){\rm d}\tau\nonumber\\
=-{\rm i}\int_{-\infty}^tU(\tau,0)^*V^{\rm bdd}(x)\tilde{U}_0(\tau){\rm d}\tau.
\end{gather}
We thus have
\begin{equation}
\|(\tilde{W}^--U(t,0)^*\tilde{U}_0(t))\Phi_v\|\leqslant\int_{-\infty}^\infty\|V^{\rm bdd}(x)\tilde{U}_0(\tau)\Phi_v\|{\rm d}\tau=O(|v|^{-1})
\end{equation}
as $|v|\rightarrow\infty$ by Lemma \ref{lem1}. This completes the proof.
\end{proof}

\begin{proof}[Proof of Theorem \ref{the1}]
It follows from
\begin{equation}
{\rm i}(\tilde{S}-1)={\rm i}(\tilde{W}^+-\tilde{W}^-)^*\tilde{W}^-=\int_{-\infty}^\infty\tilde{U}_0(t)^*V^{\rm bdd}(x)U(t,0)\tilde{W}^-{\rm d}\tau
\end{equation}
that
\begin{equation}
|v|({\rm i}(\tilde{S}-1)\Phi_v,\Psi_v)=|v|\int_{-\infty}^\infty(V^{\rm bdd}(x)\tilde{U}_0(t)\Phi_v,\tilde{U}_0(t)\Psi_v){\rm d}t+R(v)\label{the1_1}
\end{equation}
where
\begin{equation}
R(v)=|v|\int_{-\infty}^\infty((U(t,0)\tilde{W}^--\tilde{U}_0(t))\Phi_v,V^{\rm bdd}(x)\tilde{U}_0(t)\Psi_v){\rm d}t=O(|v|^{-1})
\end{equation}
as $|v|\rightarrow\infty$ by virtue of Lemmas \ref{lem1} and \ref{lem2}. We separate the integral on the right-hand side of \eqref{the1_1} such that
\begin{equation}
\int_{-\infty}^\infty=\int_{|t|<\pi/(2\omega)}+\int_{\pi/(2\omega)\leqslant|t|<r_0}+\int_{|t|\geqslant r_0}\label{the1_2}
\end{equation}
and consider the part $|t|<\pi/(2\omega)$ first. By \eqref{mehler2} and \eqref{lem1_1}, we have
\begin{equation}
e^{-{\rm i}v\cdot x}e^{{\rm i}tH_0}V^{\rm bdd}(x)e^{-{\rm i}tH_0}e^{{\rm i}v\cdot x}=e^{{\rm i}tH_0}V^{\rm bdd}(x+\sin\omega tv/\omega)e^{-{\rm i}tH_0}.\label{the1_3}
\end{equation}
We thus have
\begin{gather}
|v|\int_{|t|<\pi/(2\omega)}(V^{\rm bdd}(x)e^{-{\rm i}tH_0}\Phi_v,e^{-{\rm i}tH_0}\Psi_v){\rm d}t\nonumber\\
=|v|\int_{|t|<\pi/(2\omega)}(V^{\rm bdd}(x+\sin\omega tv/\omega)e^{-{\rm i}tH_0}\Phi_0,e^{-{\rm i}tH_0}\Psi_0){\rm d}t\nonumber\\
=\int_{|\tau|<|v|/\omega}(1/\sqrt{1-(\omega\tau/|v|)^2})(V^{\rm bdd}(x+\hat{v}\tau)e^{-{\rm i}\arcsin(\omega\tau/|v|)H_0/\omega}\Phi_0,\nonumber\\
e^{-{\rm i}\arcsin(\omega\tau/|v|)H_0/\omega}\Psi_0){\rm d}\tau\label{the1_4}
\end{gather}
by changing $\tau=\sin\omega t|v|/\omega$. Because $e^{-{\rm i}tH_0}$ is strongly continuous at $t=0$, we have
\begin{gather}
(1/\sqrt{1-(\omega\tau/|v|)^2})(V^{\rm bdd}(x+\hat{v}\tau)e^{-{\rm i}\arcsin(\omega\tau/|v|)H_0/\omega}\Phi_0,e^{-{\rm i}\arcsin(\omega\tau/|v|)H_0/\omega}\Psi_0)\nonumber\\
\rightarrow(V^{\rm bdd}(x+\hat{v}\tau)\Phi_0,\Psi_0)
\end{gather}
as $|v|\rightarrow\infty$ pointwisely in $\tau\in\mathbb{R}$. In addition, we have
\begin{gather}
|v|\int_{|t|<\pi/(2\omega)}|(V^{\rm bdd}(x)e^{-{\rm i}tH_0}\Phi_v,e^{-{\rm i}tH_0}\Psi_v)|{\rm d}t\nonumber\\
=\int_{|\tau|<\pi|v|/(2\omega)}|(V^{\rm bdd}(x)e^{-{\rm i}(\tau/|v|)H_0}\Phi_v,e^{-{\rm i}(\tau/|v|)H_0}\Psi_v)|{\rm d}\tau
\end{gather}
by changing $\tau=|v|t$. It follows from the calculations in the proof of Lemma \ref{lem1} that
\begin{gather}
|(V^{\rm bdd}(x)e^{-{\rm i}(\tau/|v|)H_0}\Phi_v,e^{-{\rm i}(\tau/|v|)H_0}\Psi_v)|\leqslant\|V^{\rm bdd}(x)e^{-{\rm i}(\tau/|v|)H_0}\Phi_v\|\|\Psi_0\|\nonumber\\
\lesssim\langle\tan\omega(\tau/|v|)v\rangle^{-2}+\langle\sin\omega(\tau/|v|)v\rangle^{-\rho}\lesssim\langle\tau\rangle^{-2}+\langle\tau\rangle^{-\rho}.
\end{gather}
We therefore obtain
\begin{equation}
|v|\int_{|t|<\pi/(2\omega)}(V^{\rm bdd}(x)e^{-{\rm i}tH_0}\Phi_v,e^{-{\rm i}tH_0}\Psi_v){\rm d}t\rightarrow\int_{-\infty}^\infty(V^{\rm bdd}(x+\hat{v}\tau)\Phi_0,\Psi_0){\rm d}\tau\label{the1_5}
\end{equation}
as $|v|\rightarrow\infty$ by the Lebesgue dominated convergence theorem. To complete our proof, we prove that the second and third integrals of \eqref{the1_2} converge to zero as $|v|\rightarrow\infty$. These were almost proved already in Lemma \ref{lem1}. Indeed, for the second integral over $\pi/(2\omega)\leqslant|t|<r_0$, we find that $|(V^{\rm bdd}(x)e^{-{\rm i}tH_0}\Phi_v,e^{-{\rm i}tH_0}\Psi_v)|\leqslant\|V^{\rm bdd}(x)e^{-{\rm i}tH_0}\Phi_v\|\|\Psi_0\|$ and
\begin{gather}
|v|\int_{\pi/(2\omega)\leqslant|t|<r_0}|(V^{\rm bdd}(x)e^{-{\rm i}tH_0}\Phi_v,e^{-{\rm i}tH_0}\Psi_v)|{\rm d}t\nonumber\\
\lesssim|v|\int_{\pi/(2\omega)}^{r_0}(\langle\tan\omega tv\rangle^{-2}+\langle\sin\omega tv\rangle^{-\rho}){\rm d}t=O(|v|^{-1})+O(|v|^{-\rho+1})\label{the1_6}
\end{gather}
by \eqref{lem1_5} and \eqref{lem1_6}. For the third integral on $|t|\geqslant r_0$, we find that
\begin{equation}
|v|\int_{|t|\geqslant r_0}|(V^{\rm bdd}(x)\tilde{U}_0(t)\Phi_v,\tilde{U}_0(t)\Psi_v)|{\rm d}t=O(|v|^{-N+1})+O(|v|^{-\rho+1})\label{the1_7}
\end{equation}
by \eqref{lem1_9} and \eqref{lem1_11}. With $N\geqslant2$ and $\rho>1$, equations \eqref{the1_5}, \eqref{the1_6} and \eqref{the1_7} complete the proof.
\end{proof}

\section{Singular case}
We now consider the instances $V^{\rm sing}\not\equiv0$ and prove the following reconstruction formula. At the end of this section, we finally complete the proof of Theorem \ref{main_thm}.

\begin{Thm}\label{the2}
Let $\Phi_v$ and $\Psi_v$ be as in Theorem \ref{the1}. Then
\begin{equation}
\lim_{|v|\rightarrow\infty}|v|({\rm i}(\tilde{S}(V)-1)\Phi_v,\Psi_v)=\int_{-\infty}^\infty(V(x+\hat{v}t)\Phi_0,\Psi_0){\rm d}t
\end{equation}
holds.
\end{Thm}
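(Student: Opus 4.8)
The plan is to run the Enss--Weder scheme exactly as in Section~2, now carrying the singular part alongside the bounded one. Writing, as there,
\[
{\rm i}(\tilde S(V)-1)=\int_{-\infty}^\infty\tilde U_0(t)^*V(x)U(t,0)\tilde W^-\,{\rm d}t,\qquad V=V^{\rm bdd}+V^{\rm sing},
\]
and pairing with $\Phi_v,\Psi_v$, the statement reduces to three points. First, the $L^1_t$ propagation bound $\int_{-\infty}^\infty\|V^{\rm sing}(x)\tilde U_0(t)\Phi_v\|\,{\rm d}t=O(|v|^{-1})$, the singular analogue of Lemma~\ref{lem1}, which I will treat as Lemma~\ref{lem3} (and likewise with $\Psi_v$). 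Second, the analogue of Lemma~\ref{lem2}, $\sup_t\|(U(t,0)\tilde W^--\tilde U_0(t))\Phi_v\|=O(|v|^{-1})$, which is immediate from the first point and the Duhamel identity, exactly as in the proof of Lemma~\ref{lem2}. Third, the limit $|v|\int_{-\infty}^\infty(V^{\rm sing}(x)\tilde U_0(t)\Phi_v,\tilde U_0(t)\Psi_v)\,{\rm d}t\to\int_{-\infty}^\infty(V^{\rm sing}(x+\hat v t)\Phi_0,\Psi_0)\,{\rm d}t$. The contributions of $V^{\rm bdd}$ to the main term and to the remainder $R(v)$ are precisely those already controlled in Theorem~\ref{the1}, so only the $V^{\rm sing}$ pieces are new, and adding the two halves produces $\int_{-\infty}^\infty(V(x+\hat v t)\Phi_0,\Psi_0)\,{\rm d}t$.

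For Lemma~\ref{lem3} I would split $\int_{-\infty}^\infty=\int_{|t|<r_0}+\int_{|t|\geqslant r_0}$ as in Lemma~\ref{lem1}. On $|t|\geqslant r_0$, use the factorization \eqref{factorization}, conjugate $e^{{\rm i}v\cdot x}$ through the free part so that the argument of $V^{\rm sing}$ becomes $t^\lambda x+t^{1-\lambda}v/(1-2\lambda)$, and note that $\supp V^{\rm sing}\subset\{|x|\leqslant R\}$ forces the translated potential to be supported in $\{|x|\geqslant t^{1-2\lambda}|v|/(2(1-2\lambda))\}$ for $|v|$ large. Then split $\Phi_0=f(p)[F(|x|\leqslant\cdots)+F(|x|>\cdots)]\Phi_0$ as in \eqref{lem1_8}: on the first piece insert $\langle p\rangle^{-2}\langle p\rangle^{2}$ and bound $\|V^{\rm sing}(t^\lambda\cdot+b)\langle p\rangle^{-2}\|\lesssim t^{-n\lambda/q}\|V^{\rm sing}\|_{L^q}$ using the infinitesimal $p^2$-boundedness of $V^{\rm sing}$ (the dilation factor $t^\lambda\geqslant r_0^\lambda$ only helps), while $\langle p\rangle^{2}f(p)\in C_0^\infty(\mathbb{R}^n)$ lets Proposition~\ref{enss} supply the decay $(1+t^{1-2\lambda}|v|)^{-N}$; the second piece is controlled by the rapid decay of $\Phi_0$. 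The change of variables $\tau=t^{1-2\lambda}|v|$ then yields $O(|v|^{-N})$ as in \eqref{lem1_9}. On $|t|<r_0$, use the Mehler representation \eqref{mehler2}: conjugating $e^{{\rm i}v\cdot x}$ produces $V^{\rm sing}(\cos\omega t\,x+\sin\omega t\,v/\omega)$, and on $\{|x|\leqslant|\tan\omega t||v|/(2\omega)\}$ one has $|\cos\omega t\,x+\sin\omega t\,v/\omega|\geqslant|\sin\omega t||v|/(2\omega)$, so by compact support this part vanishes identically once $|\sin\omega t||v|/(2\omega)>R$; the complementary set of $t$ has measure $O(|v|^{-1})$, and there $\|V^{\rm sing}(x)e^{-{\rm i}tH_0}\Phi_v\|$ is uniformly bounded, whereas on the tail $\{|x|>|\tan\omega t||v|/(2\omega)\}$ one combines H\"older in $L^q$ with Proposition~\ref{enss} and the spreading bound for $e^{-{\rm i}\tan\omega t\,p^2/(2\omega)}f(p)\Phi_0$ to get $O(|v|^{-N})$; the integrals near $t=0$ and near $t=\pi/(2\omega)$ are organized by $\tau=t|v|$ and $\tau=(\pi/\omega-t)|v|$ exactly as in \eqref{lem1_4}--\eqref{lem1_6}.

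For the third point I would split the $t$-integral as in \eqref{the1_2}. On $|t|<\pi/(2\omega)$, conjugate $e^{{\rm i}v\cdot x}$ through $e^{-{\rm i}tH_0}$ as in \eqref{the1_3}, substitute $\tau=\sin\omega t\,|v|/\omega$, and use strong continuity of $e^{-{\rm i}tH_0}$ at $t=0$ to get pointwise convergence of the integrand to $(V^{\rm sing}(x+\hat v\tau)\Phi_0,\Psi_0)$; a dominating function $\langle\tau\rangle^{-\mu}$ with $\mu>1$ is furnished by the estimates used for Lemma~\ref{lem3} (here the decay rate comes from $q$ rather than from $\rho$), and dominated convergence gives $\int_{-\infty}^\infty(V^{\rm sing}(x+\hat v\tau)\Phi_0,\Psi_0)\,{\rm d}\tau$. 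The contributions of $\pi/(2\omega)\leqslant|t|<r_0$ and $|t|\geqslant r_0$, multiplied by $|v|$, tend to $0$ by the corresponding bounds from Lemma~\ref{lem3}, just as in \eqref{the1_6}--\eqref{the1_7}. Adding the $V^{\rm bdd}$ contribution from Theorem~\ref{the1} completes the proof.

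The main obstacle is Lemma~\ref{lem3}: because $V^{\rm sing}$ lies only in $L^q$ with $q$ possibly large, it must be turned into an $L^2\to L^2$ operator bound with a $t$-decay that remains integrable and, after the $\tau$-substitution, beats $\langle\tau\rangle^{-1}$, uniformly throughout the Mehler region. The delicate place is a neighbourhood of the zeros of $\cos\omega t$, where the naive H\"older constant $|\cos\omega t|^{-n/q}$ blows up; this is precisely where one must exploit the compact support of $V^{\rm sing}$ (so that the potential only sees large $|x|$, far from the wave packet) together with the momentum cutoff $\Phi_0=f(p)\Phi_0$ and the freedom to take $N$ arbitrarily large in Proposition~\ref{enss}, in order to absorb that singular factor.
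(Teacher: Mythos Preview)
Your overall scheme matches the paper's exactly: Duhamel expansion of $\tilde S-1$, the singular propagation estimate (your Lemma~\ref{lem3}), the immediate corollary of Lemma~\ref{lem2} for $V=V^{\rm bdd}+V^{\rm sing}$, and then dominated convergence on the inner time window plus vanishing of the outer pieces. The genuine difference is in the handling of the region near $t=\pi/(2\omega)$, where $\cos\omega t\to 0$. The paper does \emph{not} push representation \eqref{mehler2} across that point; it cuts at $\pi/(4\omega)$, inserts the \emph{scaled} resolvent $\langle p/\cos\omega t\rangle^{-2}$ so that $\|V^{\rm sing}(\cos\omega t\,x+\cdot)\langle p/\cos\omega t\rangle^{-2}\|=\|V^{\rm sing}\langle p\rangle^{-2}\|$ is uniform (no H\"older blow-up), and on $\pi/(4\omega)<|t|<r_0$ switches to the Fourier-side Mehler formula \eqref{mehler1}, estimating the oscillatory integral by integration by parts (the $I_4,I_5$ decomposition; one extra integration by parts is needed in the proof of Theorem~\ref{the2} to make that piece $O(|v|^{-2})$). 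Correspondingly the dominated-convergence step in Theorem~\ref{the2} is run only over $|t|\leqslant\pi/(4\omega)$, where $\|\langle p\rangle^2e^{-{\rm i}tH_0}\Phi_0\|$ stays bounded.

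Your alternative---keep \eqref{mehler2} throughout, accept the factor $|\cos\omega t|^{-n/q}$ from H\"older, and kill it with $N>n/q$ in Proposition~\ref{enss}---is in principle workable and arguably more direct, but one step is glossed over: to apply Proposition~\ref{enss} you need a spatial cutoff $F(|x|\geqslant M)$ immediately to the left of $e^{-{\rm i}\tan\omega t\,p^2/(2\omega)}$, while to bound $V^{\rm sing}$ you need $\langle p\rangle^{-2}$ immediately to its right, and these do not commute. This is exactly the obstruction the paper sidesteps with the scaled resolvent. It can be repaired by replacing the sharp cutoff with a smooth one and invoking the commutator identity already used in \eqref{lem3_3}, or by using the off-diagonal decay of the kernel of $\langle p\rangle^{-2}$, but as written the sentence ``combine H\"older in $L^q$ with Proposition~\ref{enss}'' hides a non-trivial commutator estimate. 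Once that is made precise, your bounds on $[\pi/(2\omega),r_0)$ and $[r_0,\infty)$ are indeed $O(|v|^{-2})$ (the compact support forces the $I_3$/$I_8$-type pieces to vanish for large $|v|$), so multiplying by $|v|$ still sends them to zero, and your dominated-convergence argument on $|t|<\pi/(2\omega)$ goes through with dominating function of the form $\langle\tau\rangle^{-2}+\mathbf 1_{\{|\tau|\leqslant C\}}$ rather than a pure power.
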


To prove Theorem \ref{the2}, we prepare the following Lemma \ref{lem3}, which is the singular version of Lemma \ref{lem1}.
\begin{Lem}\label{lem3}
Let $\Phi_v$ be as in Theorem \ref{the1}. Then
\begin{equation}
\int_{-\infty}^{\infty}\|V^{\rm sing}(x)\tilde{U}_0(t)\Phi_v\|{\rm d}t=O(|v|^{-1})\label{lem3_0}
\end{equation}
holds as $|v|\rightarrow\infty$.
\end{Lem}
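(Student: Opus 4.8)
The plan is to mirror the structure of the proof of Lemma \ref{lem1}, splitting the integral as $\int_{-\infty}^\infty=\int_{|t|<r_0}+\int_{|t|\geqslant r_0}$ and handling each region by a Cook-type decomposition into a ``geometric'' piece (where the classical trajectory has escaped a large ball, so the free propagator decays rapidly by Proposition \ref{enss}) and a ``support'' piece (where we exploit the fact that $V^{\rm sing}$ is compactly supported). The essential new difficulty is that $V^{\rm sing}$ is only in $L^q$, not $L^\infty$, so one cannot simply pull it out in operator norm; instead one must bound $\|V^{\rm sing}(x)e^{-{\rm i}\tau p^2/2}f(p)\langle x\rangle^{-N}\|$ and similar quantities using the $p^2$-boundedness of $V^{\rm sing}$ together with $L^q$--$L^{q'}$ mapping properties of the free propagator, or equivalently the Kato--Rellich/Sobolev estimate $\|V^{\rm sing}\psi\|\lesssim\|\psi\|+\|p^2\psi\|$.

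Concretely, for $|t|<r_0$ I would use \eqref{mehler2} exactly as in \eqref{lem1_2} to write $\|V^{\rm sing}(x)e^{-{\rm i}tH_0}\Phi_v\|=\|V^{\rm sing}(\cos\omega t\,x+\sin\omega t\,v/\omega)e^{-{\rm i}\tan\omega t\,p^2/(2\omega)}\Phi_0\|$, and split according to whether $|x|$ is larger or smaller than a constant multiple of $|\tan\omega t||v|/\omega$. On the large-$|x|$ part I insert $f(p)$ (using $\Phi_0=f(p)\Phi_0$) and a factor $\langle x\rangle^{-N}$ to the right, absorbing $\langle x\rangle^N\Phi_0\in\mathscr{S}$; the operator $F(|x|\gtrsim|\tan\omega t||v|)e^{-{\rm i}\tan\omega t p^2/(2\omega)}f(p)\langle x\rangle^{-N}$ is then estimated by first commuting $\langle x\rangle^{-N}$ appropriately or by a direct kernel/Proposition \ref{enss}-type bound, but crucially I must left-multiply by $V^{\rm sing}(x)$, which is controlled by $\|V^{\rm sing}(x)\langle p\rangle^{-2}\|<\infty$ after commuting $\langle p\rangle^{-2}$ through the cutoffs — this is the step flagged in the first Remark as \eqref{lem3_3}, \eqref{lex3_5}, \eqref{lem3_12}. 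On the small-$|x|$ part, the shifted argument satisfies $|\cos\omega t\,x+\sin\omega t\,v/\omega|>|\sin\omega t||v|/(2\omega)$, so for $|v|$ large this lies outside $\supp V^{\rm sing}$ and the term simply vanishes (once $|\sin\omega t||v|/(2\omega)$ exceeds the radius of the support); the remaining $t$-range where this fails has length $O(|v|^{-1})$, and there one bounds crudely by $\|V^{\rm sing}\|_{L^q}$ times an $L^\infty$-bound on the propagated state, giving $O(|v|^{-1})$. The region $|t|\geqslant r_0$ is treated identically using \eqref{factorization}: the dilation $V^{\rm sing}(t^\lambda x+t^{1-\lambda}v/(1-2\lambda))$ replaces the trigonometric shift, the support of $V^{\rm sing}(t^\lambda\cdot)$ shrinks, and the same two-piece decomposition plus the change of variables $\tau=t^{1-2\lambda}|v|$ and $\tau=t^{1-\lambda}|v|$ as in \eqref{lem1_9}--\eqref{lem1_11} yields the decay, with the $L^q$-norm of the dilated potential contributing a harmless power of $t^\lambda$ that is killed by choosing $N$ large.

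The main obstacle I anticipate is precisely the bookkeeping of the singular factor in the geometric piece: one needs a clean bound of the form $\|V^{\rm sing}(x)\,F(|x|\geqslant R)\,e^{-{\rm i}\tau p^2/2}f(p)\,\langle x\rangle^{-N}\|\lesssim_N (1+R+|\tau|)^{-N}$ uniformly, and getting this requires inserting $\langle p\rangle^{-2}\langle p\rangle^{2}$, moving $\langle p\rangle^2$ past $F(|x|\geqslant R)$ and $e^{-{\rm i}\tau p^2/2}f(p)$ (which it commutes with, up to the $F$ where one pays lower-order commutators that are still Schwartz-class and rapidly decaying), so that $\|V^{\rm sing}(x)\langle p\rangle^{-2}\|$ can be factored out while the remaining operator still enjoys the Enss estimate of Proposition \ref{enss}. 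Once that lemma-within-the-proof is in place, everything else is a routine repetition of Lemma \ref{lem1} with $\|V^{\rm bdd}\|_\infty$ replaced by $\|V^{\rm sing}\|_{L^q}$ (on the support piece) and by $\|V^{\rm sing}(x)\langle p\rangle^{-2}\|$ (on the geometric piece), and the $\langle x\rangle^{-\rho}$-tail estimates are replaced by the exact statement ``eventually zero'' coming from compact support, which is in fact stronger than what was needed in the bounded case.
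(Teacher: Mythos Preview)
Your outline matches the paper's approach on $|t|\leqslant\pi/(4\omega)$ and on $|t|\geqslant r_0$: the same Enss-type splitting, the same insertion of a resolvent factor (the paper uses $\langle p/\cos\omega t\rangle^{-2}$ and $\langle p/t^\lambda\rangle^{-2}$ so that after undoing the dilation one is left with the uniform quantity $\|V^{\rm sing}(x)\langle p\rangle^{-2}\|$), and the same use of compact support for the support piece. Your observation that the support piece vanishes once $|\sin\omega t||v|/(2\omega)$ exceeds the support radius, leaving a $t$-interval of length $O(|v|^{-1})$ on which a crude $O(1)$ bound suffices, is a legitimate shortcut compared with the commutator bookkeeping the paper carries out in \eqref{lem3_3}--\eqref{lem3_4}.

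The gap is the intermediate region $\pi/(4\omega)<|t|<r_0$, which is relevant because the paper assumes $\pi/(2\omega)\leqslant r_0<\pi/\omega$ without loss of generality in \eqref{r_0}. There $\cos\omega t$ vanishes at $t=\pi/(2\omega)$ and the factorization \eqref{mehler2} you intend to use degenerates. After the dilation the potential is $V^{\rm sing}(\cos\omega t\,x+\cdots)$, whose support in $x$ has radius of order $|\cos\omega t|^{-1}$, so neither ``eventually zero by compact support'' nor a uniform bound on $\|V^{\rm sing}(\cos\omega t\,x)\langle p\rangle^{-2}\|=\|V^{\rm sing}(x)\langle\cos\omega t\,p\rangle^{-2}\|$ is available; the latter diverges like $(\cos\omega t)^{-2}$. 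Your ``anticipated obstacle'' paragraph states the desired bound with the \emph{undilated} $V^{\rm sing}(x)$, which hides precisely this difficulty. The paper avoids the degeneracy by switching, on this sub-interval, to the other Mehler formula \eqref{mehler1} built from $\mathscr{D}(\sin\omega t/\omega)\mathscr{F}$: since $|\sin\omega t|$ is bounded below there the dilation is harmless, and the decay in $|v|$ is then produced by an integration-by-parts in the Fourier integral (the terms $I_4$ and $I_5$ around \eqref{lem3_6}--\eqref{lem3_9}). Your sketch must either adopt that argument or show explicitly that the $(\cos\omega t)^{-2}$ blow-up is absorbed by the extra $|\tan\omega t|^{-N}$ gain from Proposition \ref{enss}; as written it does neither.
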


\begin{proof}
As in the proof of Lemma \ref{lem1}, we take $f\in C_0^\infty(\mathbb{R}^n)$ such that $\Phi_0=f(p)\Phi_0$ and $\supp f\subset\{\xi\in\mathbb{R}^n\bigm||\xi|\leqslant\eta\}$ with some $\eta>0$ and assume that \eqref{r_0}. Separating the integral such that
\begin{equation}
\int_{-\infty}^{\infty}=\int_{|t|\leqslant\pi/(4\omega)}+\int_{\pi/(4\omega)<|t|<r_0}+\int_{|t|\geqslant r_0}
\end{equation}
and first consider the part $|t|\leqslant\pi/(4\omega)$. Similar to \eqref{lem1_2} and \eqref{lem1_3}, we have
\begin{gather}
\|V^{\rm sing}(x)e^{-{\rm i}tH_0}\Phi_v\|=\|V^{\rm sing}(\cos\omega tx+\sin\omega tv/\omega)\langle p/\cos\omega t\rangle^{-2}\nonumber\\
\times e^{-{\rm i}\tan\omega tp^2/(2\omega)}\langle p/\cos\omega t\rangle^2\Phi_0\|\leqslant I_1+I_2+I_3,
\end{gather}
where we put
\begin{align}
I_1&=\|V^{\rm sing}(\cos\omega tx+\sin\omega tv/\omega)\langle p/\cos\omega t\rangle^{-2}\|\|\langle p/\cos\omega t\rangle^2\Phi_0\|\nonumber\\
&\quad\times\|F(|x|\geqslant|\tan\omega t||v|/(2\omega))e^{-{\rm i}\tan\omega tp^2/(2\omega)}f(p)F(|x|\leqslant|\tan\omega t||v|/(4\omega))\|,\nonumber\\
I_2&=\|V^{\rm sing}(\cos\omega tx+\sin\omega tv/\omega)\langle p/\cos\omega t\rangle^{-2}\|\|F(|x|\geqslant|\tan\omega t||v|/(2\omega))\nonumber\\
&\quad\times e^{-{\rm i}\tan\omega tp^2/(2\omega)}f(p)F(|x|>|\tan\omega t||v|/(4\omega))\langle x\rangle^{-2}\|\|\langle x\rangle^2\langle p/\cos\omega t\rangle^2\Phi_0\|,\nonumber\\
I_3&=\|V^{\rm sing}(\cos\omega tx+\sin\omega tv/\omega)\langle p/\cos\omega t\rangle^{-2}\nonumber\\
&\quad\times F(|x|<|\tan\omega t||v|/(2\omega))\|\|\langle p/\cos\omega t\rangle^2\Phi_0\|\label{lem3_1}
\end{align}
as in the proof of \cite[Proposition 2.3]{Is1}. Noting that
\begin{gather}
\|V^{\rm sing}(\cos\omega tx+\sin\omega tv/\omega)\langle p/\cos\omega t\rangle^{-2}\|\nonumber\\
=\|V^{\rm sing}(\cos\omega tx)\langle p/\cos\omega t\rangle^{-2}\|=\|V^{\rm sing}(x)\langle p\rangle^{-2}\|
\end{gather}
and that 
\begin{equation}
\|\langle p/\cos\omega t\rangle^2\Phi_0\|\leqslant\|\langle\sqrt{2} p\rangle^2\Phi_0\|
\end{equation}
because $|t|\leqslant\pi/(4\omega)$, we have
\begin{equation}
\int_{|t|\leqslant\pi/(4\omega)}(I_1+I_2){\rm d}t\lesssim\int_0^{\pi/(4\omega)}\langle\omega tv\rangle^{-2}{\rm d}t=O(|v|^{-1})\label{lem3_2}
\end{equation}
as in the proof of Lemma \ref{lem1}. Because
\begin{equation}
|\cos\omega tx+\sin\omega tv/\omega|>|\sin\omega t||v|/(2\omega)\geqslant|t||v|/4
\end{equation}
holds when $|x|<|\tan\omega t||v|/(2\omega)$, we have
\begin{gather}
\int_{|t|\leqslant\pi/(4\omega)}I_3{\rm d}t\lesssim\int_0^{\pi/(4\omega)}\|V^{\rm sing}(x)\langle p\rangle^{-2}F(|x|\geqslant|v|t/4)\|{\rm d}t\nonumber\\
=|v|^{-1}\int_0^1+|v|^{-1}\int_1^{\pi|v|/(4\omega)}\|V^{\rm sing}(x)\langle p\rangle^{-2}F(|x|\geqslant\tau/4)\|{\rm d}\tau
\end{gather}
by changing $\tau=|v|t$. The first integral over interval $0\leqslant\tau<1$ clearly has order $O(|v|^{-1})$. For the second integral over $1\leqslant\tau\leqslant\pi|v|/(4\omega)$, we take $\chi\in C^\infty(\mathbb{R}^n)$ such that $\chi(x)=1$ if $|x|\geqslant1$ and $\chi(x)=0$ if $|x|\leqslant1/2$. We then have
\begin{gather}
\|V^{\rm sing}(x)\langle p\rangle^{-2}F(|x|\geqslant\tau/4)\|\leqslant\|V^{\rm sing}(x)\langle p\rangle^{-2}\chi(4x/\tau)\|\nonumber\\
\lesssim\|V^{\rm sing}(x)\chi(4x/\tau)\langle p\rangle^{-2}\|+\tau^{-1}\|V^{\rm sing}(x)(\nabla\chi)(4x/\tau)\langle p\rangle^{-2}\|+\tau^{-2}\|V^{\rm sing}(x)\langle p\rangle^{-2}\|\label{lem3_3}
\end{gather}
by calculating the commutator
\begin{equation}
{\rm i}[\langle p\rangle^{-2},\chi(4x/\tau)]=8\langle p\rangle^{-2}((\nabla\chi)(4x/\tau)\cdot p/\tau-2{\rm i}(\Delta\chi)(4x/\tau)/\tau^2)\langle p\rangle^{-2}.
\end{equation}
Noting that $V^{\rm sing}$ is compactly supported and that the integral intervals of the first and second terms of \eqref{lem3_3} are finite for $|v|\gg1$, we have
\begin{equation}
\int_{|t|\leqslant\pi/(4\omega)}I_3{\rm d}t=O(|v|^{-1}).\label{lem3_4}
\end{equation}
We next consider the integral over $\pi/(4\omega)<|t|<r_0$. The strategy for the estimates of the integral stems from the proof of \cite[Proposition 2.3]{Is1} (see also \cite[Lemma 3.1]{Is4} and \cite[Lemma 4]{Ni3}). Using the same computation with \eqref{the1_3} and the Mehler formula \eqref{mehler1}, we have
\begin{gather}
\|V^{\rm sing}(x)e^{-{\rm i}tH_0}\Phi_v\|=\|V^{\rm sing}(x+\sin\omega tv/\omega)e^{-{\rm i}tH_0}\Phi_0\|\nonumber\\
=\|V^{\rm sing}(\sin\omega t(x+v)/\omega)\mathscr{F}\mathscr{M}(\tan\omega t/\omega)\Phi_0\|\leqslant I_4+I_5,
\end{gather}
where we put
\begin{align}
I_4&=\|V^{\rm sing}(\sin\omega t(x+v)/\omega)\langle\omega p/\sin\omega t\rangle^{-2}F(|x|\leqslant|v|/2)\|\|\langle\omega x/\sin\omega t\rangle^{2}\Phi_0\|,\\
I_5&=\|V^{\rm sing}(\sin\omega t(x+v)/\omega)\langle\omega p/\sin\omega t\rangle^{-2}\|\nonumber\\
&\quad\times\|F(|x|>|v|/2)\mathscr{F}\mathscr{M}(\tan\omega t/\omega)\langle\omega x/\sin\omega t\rangle^{2}\Phi_0\|.\quad
\end{align}
Clearly
\begin{equation}
\|\langle\omega x/\sin\omega t\rangle^{2}\Phi_0\|\lesssim\|\langle x\rangle^2\Phi_0\|
\end{equation}
holds because $\pi/(4\omega)<|t|<r_0$ and 
\begin{equation}
0<\min\{1/\sqrt{2},\sin\omega r_0\}<|\sin\omega t|
\end{equation}
noting $r_0<\pi/\omega$. When $|x|\leqslant|v|/2$, there exists a small constant $c>0$ such that
\begin{equation}
|\sin\omega t(x+v)/\omega|\geqslant|\sin\omega t||v|/(2\omega)\geqslant c|t||v|
\end{equation}
again noting $r_0<\pi/\omega$. We thus have
\begin{gather}
\int_{\pi/(4\omega)<|t|<r_0}I_4{\rm d}t\lesssim\int_{\pi/(4\omega)}^{r_0}\|V^{\rm sing}(x)\langle p\rangle^{-2}F(|x|\geqslant ct|v|)\|{\rm d}t\nonumber\\
=|v|^{-1}\int_{\pi|v|/(4\omega)}^{r_0|v|}\|V^{\rm sing}(x)\langle p\rangle^{-2}F(|x|\geqslant c\tau)\|{\rm d}\tau
\end{gather}
by changing $\tau=|v|t$. For $\tau>\pi|v|/(4\omega)\gg1$, we have
\begin{equation}
\|V^{\rm sing}(x)\langle p\rangle^{-2}F(|x|\geqslant c\tau)\|\lesssim\tau^{-2}\|V^{\rm sing}(x)\langle p\rangle^{-2}\|\label{lem3_5}
\end{equation}
as in \eqref{lem3_3} noting that $V^{\rm sing}$ is compactly supported. Therefore, we can obtain
\begin{equation}
\int_{\pi/(4\omega)<|t|<r_0}I_4{\rm d}t=O(|v|^{-2}).\label{lem3_6}
\end{equation}
For the integral $I_5$, we write
\begin{gather}
\mathscr{F}\mathscr{M}(\tan\omega t/\omega)\langle\omega x/\sin\omega t\rangle^{2}\Phi_0\nonumber\\
=\int_{\mathbb{R}^n}e^{-{\rm i}x\cdot y}e^{{\rm i}\omega y^2/(2\tan\omega t)}\langle\omega y/\sin\omega t\rangle^{2}\Phi_0(y){\rm d}y/(2\pi)^{n/2}.
\end{gather}
Using the relation $e^{-{\rm i}x\cdot y}=\langle x\rangle^{-2}(1+{\rm i}x\cdot \nabla_y)e^{-{\rm i}x\cdot y}$ and integrating by parts,
we have
\begin{gather}
\mathscr{F}\mathscr{M}(\tan\omega t/\omega)\langle\omega x/\sin\omega t\rangle^{2}\Phi_0=\langle x\rangle^{-2}\mathscr{F}\mathscr{M}(\tan\omega t/\omega)\langle\omega x/\sin\omega t\rangle^{2}\Phi_0\nonumber\\
+(\omega/\tan\omega t)\langle x\rangle^{-2}x\cdot\mathscr{F}x\mathscr{M}(\tan\omega t/\omega)\langle\omega x/\sin\omega t\rangle^{2}\Phi_0\nonumber\\
-{\rm i}\langle x\rangle^{-2}x\cdot\mathscr{F}\mathscr{M}(\tan\omega t/\omega)\nabla_x\langle\omega x/\sin\omega t\rangle^{2}\Phi_0\label{lem3_7}
\end{gather}
and
\begin{gather}
\|F(|x|>|v|/2)\mathscr{F}\mathscr{M}(\tan\omega t/\omega)\langle\omega x/\sin\omega t\rangle^{2}\Phi_0\|\nonumber\\
\lesssim |v|^{-2}\|\langle x\rangle^2\Phi_0\|+|v|^{-1}(\|\langle x\rangle^3\Phi_0\|+\|\langle x\rangle^2\nabla\Phi_0\|).\label{lem3_8}
\end{gather}
It follows from \eqref{lem3_8} and
\begin{equation}
\|V^{\rm sing}(\sin\omega t(x+v)/\omega)\langle\omega p/\sin\omega t\rangle^{-2}\|=\|V^{\rm sing}(x)\langle p\rangle^{-2}\|
\end{equation}
that
\begin{equation}
\int_{\pi/(4\omega)<|t|<r_0}I_5{\rm d}t=O(|v|^{-1}).\label{lem3_9}
\end{equation}
We consider the final integral over $|t|\geqslant r_0$, in particular $t\geqslant r_0$. In the same way with \eqref{lem1_7}, \eqref{lem1_8}, and \eqref{lem3_1}, we have
\begin{gather}
\|V^{\rm sing}(x)\tilde{U}_0(t)\Phi_v\|=\|V^{\rm sing}(t^\lambda x+t^{1-\lambda}v/(1-2\lambda))\langle p/t^\lambda\rangle^{-2}\nonumber\\
\times e^{-{\rm i}t^{1-2\lambda}p^2/(2(1-2\lambda))}\langle p/t^\lambda\rangle^2\Phi_0\|\lesssim I_6+I_7+I_8,
\end{gather}
where we put, with $N\in\mathbb{N}$,
\begin{align}
I_6&=\|F(|x|\geqslant t^{1-2\lambda}|v|/(2(1-2\lambda)))e^{-{\rm i}t^{1-2\lambda}p^2/(2(1-2\lambda))}f(p)\nonumber\\
&\qquad\times F(|x|\leqslant t^{1-2\lambda}|v|/(4(1-2\lambda)))\|,\nonumber\\
I_7&=\|F(|x|>t^{1-2\lambda}|v|/(4(1-2\lambda)))\langle x\rangle^{-N}\|,\nonumber\\
I_8&=\|V^{\rm sing}(t^\lambda x+t^{1-\lambda}v/(1-2\lambda))\langle p/t^\lambda\rangle^{-2}F(|x|<t^{1-2\lambda}|v|/(2(1-2\lambda)))\|.\label{lem3_10}
\end{align}
We here used $\|V^{\rm sing}(t^\lambda x)\langle p/t^\lambda\rangle^{-2}\|=\|V^{\rm sing}(x)\langle p\rangle^{-2}\|$, $\|\langle p/t^\lambda\rangle^2\Phi_0\|\leqslant\|\langle p/r_0^\lambda\rangle^2\Phi_0\|$ and
\begin{equation}
\|\langle x\rangle^N\langle p/t^\lambda\rangle^2\Phi_0\|\leqslant\|\langle p/t^\lambda\rangle^2\langle x\rangle^N\Phi_0\|+\|[\langle x\rangle^N,\langle p/t^\lambda\rangle^2]\Phi_0\|\lesssim1
\end{equation}
in \eqref{lem3_10}. We immediately have
\begin{equation}
\int_{r_0}^\infty(I_6+I_7){\rm d}t=O(|v|^{-N})\label{lem3_11}
\end{equation}
as in \eqref{lem1_9} for $N\gg1$ such that $-N+2\lambda/(1-2\lambda)<-1$. Because \eqref{lem1_10} holds when $|x|<t^{1-2\lambda}|v|/(2(1-2\lambda))$, we have
\begin{gather}
\int_{r_0}^\infty I_8{\rm d}t\leqslant\int_{r_0}^\infty\|V^{\rm sing}(x)\langle p\rangle^{-2}F(|x|>t^{1-\lambda}|v|/(2(1-2\lambda)))\|{\rm d}t\nonumber\\
\lesssim|v|^{-1/(1-\lambda)}\int_{r_0^{1-\lambda}|v|}^\infty\tau^{\lambda/(1-\lambda)}\|V^{\rm sing}(x)\langle p\rangle^{-2}F(|x|>\tau/(2(1-2\lambda)))\|{\rm d}\tau
\end{gather}
by changing $\tau=t^{1-\lambda}|v|$. As in \eqref{lem3_5}, we thus have
\begin{equation}
\int_{r_0}^\infty I_8{\rm d}t\lesssim|v|^{-1/(1-\lambda)}\int_{r_0^{1-\lambda}|v|}^\infty\tau^{\lambda/(1-\lambda)-2}{\rm d}\tau=O(|v|^{-2})\label{lem3_12}
\end{equation}
noting that $\lambda/(1-\lambda)-2<-1$. Equations \eqref{lem3_2}, \eqref{lem3_4}, \eqref{lem3_6}, \eqref{lem3_9}, \eqref{lem3_11}, and \eqref{lem3_12} imply \eqref{lem3_0}.
\end{proof}

\begin{proof}[Proof of Theorem \ref{the2}]
Note that Lemma \ref{lem2} also holds for $V=V^{\rm bdd}+V^{\rm sing}$ by virtue of Lemma \ref{lem3}. We therefore have
\begin{equation}
|v|({\rm i}(\tilde{S}-1)\Phi_v,\Psi_v)=|v|\int_{-\infty}^\infty(V(x)\tilde{U}_0(t)\Phi_v,\tilde{U}_0(t)\Psi_v){\rm d}t+O(|v|^{-1}).
\end{equation}
Because we have already proved
\begin{equation}
|v|\int_{-\infty}^\infty(V^{\rm bdd}(x)\tilde{U}_0(t)\Phi_v,\tilde{U}_0(t)\Psi_v){\rm d}t\rightarrow\int_{-\infty}^\infty(V^{\rm bdd}(x+\hat{v}\tau)\Phi_0,\Psi_0){\rm d}\tau
\end{equation}
as $|v|\rightarrow\infty$ in the proof of Theorem \ref{the1}, it suffices to prove
\begin{equation}
|v|\int_{-\infty}^\infty(V^{\rm sing}(x)\tilde{U}_0(t)\Phi_v,\tilde{U}_0(t)\Psi_v){\rm d}t\rightarrow\int_{-\infty}^\infty(V^{\rm sing}(x+\hat{v}\tau)\Phi_0,\Psi_0){\rm d}\tau\label{the2_1}
\end{equation}
as $|v|\rightarrow\infty$. We separate the integral such that
\begin{equation}
\int_{-\infty}^{\infty}=\int_{|t|\leqslant\pi/(4\omega)}+\int_{\pi/(4\omega)<|t|<r_0}+\int_{|t|\geqslant r_0}
\end{equation}
and first consider the integral over $|t|\leqslant\pi/(4\omega)$. As in \eqref{the1_4}, we have
\begin{gather}
|v|\int_{|t|\leqslant\pi/(4\omega)}(V^{\rm sing}(x)e^{-{\rm i}tH_0}\Phi_v,e^{-{\rm i}tH_0}\Psi_v){\rm d}t=\int_{|\tau|\leqslant|v|/(\sqrt{2}\omega)}(1/\sqrt{1-(\omega\tau/|v|)^2})\nonumber\\
\times(V^{\rm sing}(x+\hat{v}\tau)e^{-{\rm i}\arcsin(\omega\tau/|v|)H_0/\omega}\Phi_0,e^{-{\rm i}\arcsin(\omega\tau/|v|)H_0/\omega}\Psi_0){\rm d}\tau.
\end{gather}
Because
\begin{equation}
p_je^{-{\rm i}tH_0}\Phi_0=-\sin\omega te^{-{\rm i}tH_0}x_j\Phi_0-\sin\omega t\tan\omega te^{-{\rm i}tH_0}p_j\Phi_0/\omega+\omega e^{-{\rm i}tH_0}p_j\Phi_0/\cos\omega t
\end{equation}
for $1\leqslant j\leqslant n$ by \eqref{mehler2}, we have
\begin{equation}
\|\langle p\rangle^2e^{-{\rm i}\arcsin(\omega\tau/|v|)H_0/\omega}\Phi_0\|\lesssim1
\end{equation}
for $|\tau|\leqslant|v|/(\sqrt{2}\omega)$ and
\begin{gather}
(1/\sqrt{1-(\omega\tau/|v|)^2})(V^{\rm sing}(x+\hat{v}\omega\tau)e^{-{\rm i}\arcsin(\omega\tau/|v|)H_0/\omega}\Phi_0,e^{-{\rm i}\arcsin(\omega\tau/|v|)H_0/\omega}\Psi_0)\nonumber\\
\rightarrow(V^{\rm sing}(x+\hat{v}\tau)\Phi_0,\Psi_0)
\end{gather}
as $|v|\rightarrow\infty$ pointwisely in $\tau\in\mathbb{R}$. In addition, we have
\begin{gather}
|v|\int_{|t|\leqslant\pi/(4\omega)}|(V^{\rm sing}(x)e^{-{\rm i}tH_0}\Phi_v,e^{-{\rm i}tH_0}\Psi_v)|{\rm d}t\nonumber\\
=\int_{|\tau|\leqslant\pi|v|/(4\omega)}|(V^{\rm sing}(x)e^{-{\rm i}(\tau/|v|)H_0}\Phi_v,e^{-{\rm i}(\tau/|v|)H_0}\Psi_v)|{\rm d}\tau
\end{gather}
by changing $\tau=|v|t$. From the calculations developed in the proof of Lemma \ref{lem3}, we find
\begin{gather}
|(V^{\rm sing}(x)e^{-{\rm i}(\tau/|v|)H_0}\Phi_v,e^{-{\rm i}(\tau/|v|)H_0}\Psi_v)|\leqslant\|V^{\rm sing}(x)e^{-{\rm i}(\tau/|v|)H_0}\Phi_v\|\|\Psi_0\|\nonumber\\
\lesssim\langle\tau\rangle^{-2}+\|V^{\rm sing}(x)\langle p\rangle^{-2}F(|x|\geqslant|\tau|/4)\|.\label{the2_2}
\end{gather}
The right-hand side of \eqref{the2_2} is integrable for $\tau$ independently of $v$ (see \eqref{lem3_3}). We therefore obtain
\begin{equation}
|v|\int_{|t|\leqslant\pi/(4\omega)}(V^{\rm sing}(x)e^{-{\rm i}tH_0}\Phi_v,e^{-{\rm i}tH_0}\Psi_v){\rm d}t\rightarrow\int_{-\infty}^\infty(V^{\rm sing}(x+\hat{v}\tau)\Phi_0,\Psi_0){\rm d}\tau\label{the2_3}
\end{equation}
as $|v|\rightarrow\infty$ by the Lebesgue dominated convergence theorem. For the integral over $\pi/(4\omega)<|t|<r_0$, integrating by parts in \eqref{lem3_7} once more, we find that \eqref{lem3_9} has order $O(|v|^{-2})$. we thus have
\begin{equation}
|v|\int_{\pi/(4\omega)<|t|<r_0}|(V^{\rm sing}(x)e^{-{\rm i}tH_0}\Phi_v,e^{-{\rm i}tH_0}\Psi_v)|{\rm d}t=O(|v|^{-1})\label{the2_4}
\end{equation}
as $|v|\rightarrow\infty$ by using calculations obtained in the proof of Lemma \ref{lem3} (see also \eqref{lem3_6}). Finally, for the integral over $|t|\geqslant r_0$, we also have
\begin{equation}
|v|\int_{|t|\geqslant r_0}|(V^{\rm sing}(x)\tilde{U}_0(t)\Phi_v,\tilde{U}_0(t)\Psi_v)|{\rm d}t=O(|v|^{-N+1})+O(|v|^{-1})\label{the2_5}
\end{equation}
as $|v|\rightarrow\infty$ by \eqref{lem3_11} and \eqref{lem3_12}. With $N\geqslant2$, equations \eqref{the2_3}, \eqref{the2_4} and \eqref{the2_5} imply \eqref{the2_1}.
\end{proof}

\begin{Rem}\label{rem2}
In our proofs of Theorems \ref{the1}, \ref{the2}, Lemmas \ref{lem1} and \ref{lem3}, we partitioned the integrals at points $\pi/(4\omega)$, $\pi/(2\omega)$, and $r_0$. However, if we assume $0<r_0<\pi/(2\omega)$, it suffices to separate the integrals such that
\begin{equation}
\int_{-\infty}^\infty=\int_{|t|<r_0}+\int_{|t|\geqslant r_0}
\end{equation}
in these proofs. We especially do not have to consider the integrals over $\pi/(4\omega)<|t|<r_0$ in the proofs of Theorem \ref{the2} and Lemma \ref{lem3} even if $r_0>\pi/(4\omega)$.
\end{Rem}

\begin{proof}[Proof of Theorem \ref{main_thm}]
By the same computation with \eqref{lem3_3}, we have the Enss condition
\begin{equation}
\int_0^\infty\|V(x)\langle p\rangle^{-2}F(|x|\geqslant R)\|dR<\infty.
\end{equation}
We can assume $\tilde{S}(V_1)=\tilde{S}(V_2)$ by \eqref{tilde_S}. From Theorem \ref{the2} and the Plancherel formula associated with the Radon transform (\cite[Theorem 2.17 in Chap.1]{He}), $V_1=V_2$ can be proved similarly as in the proof of \cite[Theorem 1.1 in p. 3909]{EnWe}.
\end{proof}

\noindent\textbf{Acknowledgments.} 
This work was supported by JSPS KAKENHI Grant Numbers JP20K03625 and JP21K03279. The author thanks Associate Professor Masaki Kawamoto of Okayama University and referees for their valuable comments.

\end{document}